\documentclass[journal]{IEEEtran}

\usepackage{tikz}
\usepackage{standalone}
\usetikzlibrary{shapes, arrows.meta, positioning, fit, backgrounds, shadows, calc}
\definecolor{mygreen}{RGB}{213,232,212}
\definecolor{mygreenline}{RGB}{98,140,94}
\definecolor{myblue}{RGB}{218,232,252}
\definecolor{myblueline}{RGB}{108,142,191}
\definecolor{myorange}{RGB}{255,242,204}
\definecolor{myorangeline}{RGB}{214,182,86}
\definecolor{mypurple}{RGB}{225,213,231}

\usepackage[ruled,linesnumbered]{algorithm2e}
\makeatletter
\newcommand{\RemoveAlgoNumber}{\renewcommand{\fnum@algocf}{\AlCapSty{\AlCapFnt\algorithmcfname}}}
\newcommand{\RevertAlgoNumber}{\algocf@resetfnum}
\makeatother

\IEEEoverridecommandlockouts

\usepackage{cite}
\usepackage{amsmath, amsthm, amssymb, amsfonts, mathtools, xfrac, bbm, bm}
\usepackage{empheq}
\usepackage[noend]{algorithmic}
\usepackage{graphicx, float}
\usepackage{textcomp}
\usepackage{xcolor}
\usepackage{booktabs, multirow}
\usepackage[Symbol]{upgreek}
\usepackage{tikz}
\usepackage[caption=false,font=footnotesize]{subfig}

\newcommand{\tph}{%
    \text{%
        \kern0.25em% <--- Adds a gap to the left of the H (Adjust 0.1em to fit)
        \begin{tikzpicture}[baseline, line cap=round]
            \draw[line width=0.4pt] (0,0) -- (0,1.5ex);        
            \draw[line width=0.4pt] (0.5em,0) -- (0.5em,1.5ex);
            \draw[line width=0.4pt] (0,0.85ex) -- (0.5em,0.85ex); 
        \end{tikzpicture}%
    }%
}

\usepackage{geometry} 
\def\BibTeX{{\rm B\kern-.05em{\sc i\kern-.025em b}\kern-.08em
    T\kern-.1667em\lower.7ex\hbox{E}\kern-.125emX}}

\graphicspath{{./Figures/}}

\theoremstyle{remark}
\newtheorem{Remark}{Remark}
\newtheorem{Lemma}{Lemma}
\newtheorem{Theorem}{Theorem}
\newtheorem{Corollary}{Corollary}

\DeclareMathOperator*{\argmin}{arg\,min}

\newcommand{\uth}{\underline{\textnormal{th}}}
\newcommand{\tsum}{\mathop{\textstyle{\sum}}}
\newcommand{\tprod}{\mathop{\textstyle{\prod}}}

\DeclarePairedDelimiter{\parens}{(}{)}
\newcommand{\trace}{\textrm{Tr}\parens}
\newcommand{\rank}[1]{\texttt{rank}\left( #1 \right)}

\newcommand{\mean}{\mathbf{E}}
\newcommand{\smean}{\mathbbm{E}}
\newcommand{\kmul}{\mathbf{K}}
\newcommand{\skmul}{\mathbbm{K}}

\newcommand{\diag}[1]{\textnormal{diag}\!\left(#1\right)}

\newcommand{\nint}[1]{\textnormal{nint}\left(#1\right)}
\newcommand{\avgsnr}{\bar{\gamma}}

\newcommand{\ibf}[1]{{ \boldsymbol{\mathit{#1}} }}

\newcommand{\bfC}{\mathbf{C}}
\newcommand{\bfD}{\mathbf{D}}

\newcommand{\bfG}{\mathbf{G}}
\newcommand{\bfH}{\mathbf{H}}
\newcommand{\bfI}{\mathbf{I}}
\newcommand{\bfK}{\mathbf{K}}

\newcommand{\bfP}{\mathbf{P}}

\newcommand{\bfR}{\mathbf{R}}

\newcommand{\bfT}{\mathbf{T}}
\newcommand{\bfTheta}{\mathbf{\Theta}}
\newcommand{\bfSigma}{\mathbf{\Sigma}}

\newcommand{\bfDelta}{\mathbf{\Delta}}
\newcommand{\bfPsi}{\mathbf{\Psi}}
\newcommand{\bfU}{\mathbf{U}}
\newcommand{\bfV}{\mathbf{V}}
\newcommand{\bfW}{\mathbf{W}}
\newcommand{\bfX}{\mathbf{X}}
\newcommand{\bfZ}{\mathbf{Z}}
\newcommand{\bfalpha}{\boldsymbol{\alpha}}
\newcommand{\bfbeta}{\boldsymbol{\beta}}
\newcommand{\bfgamma}{\boldsymbol{\gamma}}

\newcommand{\varMu}{\scalebox{1.2}{$\mu$}}
\newcommand{\varRho}{\scalebox{1.2}{$\varrho$}}
\newcommand{\scaleSigma}{\scalebox{1.1}{$\sigma$}}
\newcommand{\varP}{\scalebox{1.1}{$p$}}

\newcommand{\kron}{{\sf Kronecker}}
\newcommand{\kld}{{\sf KLD}}
\newcommand{\mmm}{{\sf MMM}}
\newcommand{\iid}{\textnormal{i.i.d.}}

\begin{document}
% \raggedbottom
% \allowdisplaybreaks
% \setlength{\textfloatsep}{0pt}
\setlength{\abovedisplayskip}{4pt}
\setlength{\belowdisplayskip}{4pt}
% \setlength{\abovecaptionskip}{-5pt}
% \setlength{\belowcaptionskip}{-10pt}
% \setlength\intextsep{0pt}
% \setlength{\dbltextfloatsep}{5pt}

% \raggedbottom
% \allowdisplaybreaks % Allow long equation to continue between pages 
\setlength{\textfloatsep}{5pt} % Remove spacing after a figure
\setlength\intextsep{0pt}
\setlength{\dbltextfloatsep}{5pt}

% \fontdimen3\font =0pt
\title{New Approximations of Non-Separable MIMO Channels by Separable Channels for Accurate Ergodic Capacity Analysis}

\author{
    \IEEEauthorblockN{
        Thanh~Luan~Nguyen\IEEEauthorrefmark{1};
        Zygmunt~J.~Haas{\IEEEauthorrefmark{2}};
        Chadi Abou-Rjeily\IEEEauthorrefmark{3};
        and ~Georges~Kaddoum{\IEEEauthorrefmark{1}}; 
    } \\
    \IEEEauthorrefmark{1}Department of Electrical Engineering, \'{E}cole de Technologie Sup\'{e}rieure (\'{E}TS), Montr\'{e}al, QC, Canada \\
    \IEEEauthorrefmark{2}School of Electrical and Computer Engineering, Cornell University, Ithaca, NY 14853. \\
    \IEEEauthorrefmark{3}Department of Electrical and Computer Engineering, Lebanese American University (LAU), Byblos 5053, Lebanon. \\
    Emails: 
    thanh-luan.nguyen.1@ens.etsmtl.ca,
    chadi.abourjeily@lau.edu.lb,
    zhaas@cornell.edu,
    georges.kaddoum@etsmtl.ca.
}

\maketitle

\begingroup
\renewcommand{\thefootnote}{}
\footnotetext{This manuscript has been submitted for publication.}
\addtocounter{footnote}{-1}
\endgroup

\begin{abstract}
In recent years, owing  to the high accuracy in characterizing non-separable  channels prevalent  in next-generation wireless applications, the classical Weichselberger channel model has gained widespread adoption in multiple-input multiple-output (MIMO) systems.     However, its non-separable structure also introduces severe analytical complexity, leading to a lack of tractable mathematical frameworks in the literature and thus raises an urgent need for further research.
To address the aforementioned analytical complexity, we first derive the nearest separable (double-correlated Rayleigh) fading model to the Weichselberger model under the Kullback–Leibler divergence (KLD), a problem equivalent to rank-$1$ nonnegative matrix factorization under the Itakura–Saito (IS) distance criterion.
    The results of our asymptotic analysis in the high-SNR regime reveal that the KLD-enabled approximation achieves a tighter capacity estimate than the conventional Kronecker model, especially in sparse and non-regular scattering environments.
Yet, a key limitation of the KLD-enabled model is its tendency to mischaracterize the channel capacity in the low-SNR regime due to its inability to preserve total channel power.
    As a more robust alternative, we introduce a novel moment matching method (MMM) aimed at mapping the exact channel statistics to those of a Wishart distribution.
Both the KLD-enabled and MMM-enabled separable channel directly enable the use of exact closed-form expressions for the ergodic capacity. 
    Numerical results demonstrate that the MMM-enabled model consistently improves upon the capacity accuracy of the conventional Kronecker model across all SNR regimes.
\end{abstract}

\begin{IEEEkeywords}
Weichselberger; Ergodic capacity, Moment matching method; Kullback-Leibler divergence; Nonnegative matrix factorization
\end{IEEEkeywords}

\section{Introduction}
\IEEEPARstart{T}{HE} global evolution towards next-generation wireless networks has increased the development and deployment of multiple-input multiple-output (MIMO) technologies to achieve further gains in capacity and reliability over bandwidth-limited channels \cite{Zhang2018CM}, \cite{bjornson2019massive}, \cite{rappaport2019wireless}, \cite{wang2023road}.
    However, an accurate quantification of such performance gains largely depends on the ability of the underlying channel model to capture different properties of real-world propagation environments.
Driven by such requirements, in recent years, considerable efforts have been invested in the development of channel models that would balance the trade-off between complexity and accuracy for reliable MIMO system performance evaluation \cite{ozcelik2005makes}, \cite{wang2023pervasively}, \cite{demir2024spatial}, \cite{psychogios2025dual}.

    A main challenge in such accurate evaluation of MIMO system performance is the presence of spatial correlation. 
    In real-world deployments, spatial correlation is inevitable and can degrade the MIMO capacity \cite{sayeed2002deconstructing, ozcelik2005makes}. 
Zero spatial correlation requires highly idealized (and practically unrealistic) conditions, such as isotropic scattering combined with uniform linear arrays (ULAs) spaced at exact integer multiples of half the wavelength. 
    Since realistically uncorrelated environments are rarely observed, tremendous effort has been invested into characterizing correlated MIMO channels. 
    One of the most well-known frameworks in the literature is the double-correlated Rayleigh model, which introduces a Kronecker correlation structure \cite{kermoal2002stochastic, ying2014kronecker}. 
The Kronecker framework simplifies mathematical analysis by assuming that spatial correlation at the transmitter operates independently from spatial correlation at the receiver, resulting in a separable variance profile.
    
    However, the Kronecker model fails to capture physical dependence between the transmitting and receiving antenna arrays and, consequently, frequently underestimates the true channel capacity in realistic propagation environments \cite{Raghavan2010TIT}.
Another critical limitation of the Kronecker model is its inability to accurately predict capacity and diversity even for large antenna arrays \cite{ozcelik2005makes}, which makes it unsuitable for accurate performance analysis of Extremely Large Aperture Array (ELAA) technologies \cite{rappaport2019wireless}, which is an advanced wireless antenna architecture that features thousands of closely coordinated elements spanning large physical dimensions.
    To overcome the aforementioned limitations, a more generalized alternative that was recently proposed is the Weichselberger model, which introduces a non-separable correlation structure to characterize arbitrary coupling between transmit and receive antennas while encompassing the Kronecker model as a simplified case \cite{kermoal2002stochastic}, \cite{Weichselberger2006TWC}, \cite{Raghavan2010TIT}. 
As a result, the Weichselberger model provides a more general structure in representing a variety of channel scenarios. 
    Recently, this model has  become essential for evaluating emerging next-generation architectures \cite{psychogios2025dual}, such as reconfigurable intelligent surface (RIS)-aided channels \cite{ren2026multi} and holographic MIMO systems \cite{Zhang2025TIT}.

Yet, despite the broad applicability of the Weichselberger model, its underlying non-separable structure introduces significant analytical complexities, as it often relies on analytical upper bounds \cite{gao2009statistical} or large-system asymptotic approximations \cite{wen2011ergodic, Raghavan2010TIT, Zhang2025TIT}. 
    Such methods can fall short when exact performance evaluation is required for systems with finite-dimensional antenna arrays.
    A typical example is the fully-connected beyond-diagonal RIS architecture \cite{Li2026tutotial}, where hardware-induced attenuation scales exponentially as a result of mutual coupling between elements and limits the maximum number of deployable elements.
This induces a critical need for tractable mathematical approximations capable of preserving the underlying statistical properties of the Weichselberger model across both finite-dimensional as well as large-scale configurations.

\subsection{Related Works}

The characterization of Ergodic capacity (EC) and outage probability (OP) frequently depends on whether the channel matrices are centered, separable, and analyzed in the asymptotic or finite (non-asymptotic) regimes. 
    For {centered and non-separable channels}, in \cite{gao2009statistical}, the authors derived a non-asymptotic, closed-form tight upper bound for the EC using the matrix permanent (i.e., perm()) of the eigenmode channel coupling matrix. 
In the large-system limit for {non-centered and non-separable channels}, in \cite{wen2011ergodic} and \cite{wen2011sum}, the authors evaluated the asymptotic sum-rate capacity of multiuser MIMO uplink channels as the number of transmit and receive antennas approaches infinity. 
    Taking a finite-system approach, in \cite{wang2023pervasively}, the authors proposed a pervasively correlated channel model to evaluate the capacity distributions via numerical sampling. 
In another study that relied on the Random Matrix Theory (RMT) \cite{Zhang2025TIT}, the authors established a central limit theorem (CLT) to provide closed-form, asymptotic expressions for the mean and variance of the MI in channels containing both line-of-sight (LOS) and non-line-of-sight (NLOS) components.

In one of the earliest studies \cite{kollo1995approximating}, the authors established a formal framework for approximating unknown multivariate probability density functions (PDFs) of a symmetric random matrix by using Edgeworth series expansion around the centered
Wishart base.
    Due to their reliance on Kronecker-structured matrix moments, the dimensionality of higher-order moments was found to explode exponentially, thus complicating their moment-matching framework for heavily skewed distributions. 
By contrast, subsequent studies \cite{hillier2021moments} and \cite{maiwald2000calculation} derived the true matrix-valued moments of real-valued Wishart and complex-valued Wishart random matrices, respectively, which preserve the original matrix dimensions.
    In the closest relevant work to this paper, \cite{Pivaro2017TVT} proposed a Wishart matrix approximation to derive non-asymptotic, closed-form expressions for the ergodic sum-rate capacity of MIMO channels via exact marginal eigenvalue distributions. 
Although the aforementioned study was strictly limitted to the sum of isotropic complex Wishart matrices, it provided a fresh moment-matching perspective that motivates the need for a generalized approximation framework for Wishart matrices with arbitrary covariance structures. 
    Should such an approximation framework be established, analytical results for separable channels could be directly adapted to accurately explore the performance of future wireless communication systems.
Indeed, in the literature, a wealth of exact and asymptotic solutions for these separable models was already proposed.

For {centered and separable channels}, in \cite{kiessling2004exact}, the authors provided a strictly non-asymptotic analysis, characterizing the exact EC as a sum of determinants. 
    Using asymptotic RMT, in \cite{hachem2008new} and \cite{bao2015asymptotic}, the authors investigated the Mutual Information (MI) distribution, establishing methods that remain accurate even when applied to non-asymptotic, finite-dimensional systems.
For {non-centered and separable channels}, such as the correlated Rician channels, in \cite{dumont2010capacity}, the authors optimized capacity-achieving covariance matrices by relying on an asymptotic approximation of the average MI. 
    Through a PDF-enabled framework, another relevant work \cite{maaref2007joint} derived exact, non-asymptotic eigenvalue distributions for non-central Wishart matrices to compute the mean, variance, and higher-order statistics of the random channel capacity.
    
\subsection{Contributions}

The main analytical complexity of non-separable Weichselberger channels originates from the power coupling matrix, denoted as $\bfP$.
    To resolve the aforementioned analytical complexity, in our present study, we propose two distinct approximations that specifically target the stochastic NLOS component. 
By focusing our rank-$1$ factorizations solely to $\bfP$, the deterministic LOS component remains perfectly exact and also prevents it from masking the true approximation errors during performance evaluation. 
    Building upon this strategy, the main contributions of this paper can be summarized as follows:
\begin{itemize}
    \item We formulate a computationally efficient Kullback–Leibler Divergence (KLD)-enabled {rank-$1$} approximation algorithm for non-separable power coupling matrices. The proposed algorithm demonstrates rapid convergence and accurately tracks the exact Weichselberger eigenvalue distribution in standard scattering environments.
    \item We establish an analytically tractable moment matching method (MMM) procedure that maps the exact channel statistics onto an equivalent Wishart matrix by solving a Stieltjes moment problem. The proposed framework is enabled by deriving novel recursive expressions for the matrix-valued moments/cumulants of the sum of complex Wishart matrices with diagonal covariance.
    \item We also derive exact, closed-form analytical expressions for the EC of the proposed KLD-enabled and MMM-enabled rank-$1$ model. In addition, we provide an asymptotic capacity gap analysis in the high-signal-to-noise ratio (SNR) extreme to formally establish that the proposed KLD-enabled model achieves a tighter capacity gap as compared to the conventional Kronecker baseline.
    \item Through extensive numerical validation, we demonstrate that the MMM-enabled model serves as a robust alternative to the KLD-enabled model that accurately approximates the dominant eigenvalue moments. 
    Specifically, it consistently improves upon the capacity accuracy of the traditional Kronecker baseline across all signal-to-noise ratio (SNR) regimes, while providing highly reliable capacity estimates even for highly structured power coupling.
\end{itemize}

\textit{Notation}: We use the $\diag{\cdot}$ operator that puts the elements of a $d$-dimensional vector onto the main diagonal of an
$d\times d$ diagonal matrix, and the $\trace{\cdot}$ operator that denotes the trace of a matrix. 
    Superscripts $(\cdot)^{\top}$ and $(\cdot)^{\tph}$ denote the matrix transpose and Hermitian matrix transpose, respectively. 
Symbols $\odot$, $(\cdot)^{\circ\frac{1}{2}}$, $(\cdot)^{\circ-1}$ denote the element-wise product, square root, and inverse operators, respectively.
    Boldface letters indicate matrices, such that $\mean[\cdot]$ and $\kmul_n[\cdot]$ return the matrix-valued expectation and $n^{\uth}$ order cumulant, as opposed to their scalar counterparts $\smean[\cdot]$ and $\skmul_n[\cdot]$. % The distinction is purely for readability.

\section{Weichselberger MIMO Channel Model}

The general non-centered (Rician) Weichselberger MIMO channel matrix ${\bfH} \in \mathbbm{C}^{R\times T}$ is modeled as the superposition of a deterministic LOS component and a stochastic NLOS component, where $R$ and $T$ are the numbers of receiving and transmitting antennas, respectively, given by the following~\cite{Zhang2025TIT}:
\begin{align}
\bfH_{\textnormal{tot}} 
    &=   \bar{\bfH} + {\bfH} \\
    &=  \bar{\bfH} + \bfU_{\bfR} \left[ 
        \bfP^{\circ\frac{1}{2}} \odot \bfH_{\iid} 
    \right] \bfU_{\bfT}^{\tph},
\label{eq:Rician_Weichselberger_H}
\end{align}
where 
    $\bar{\bfH} \in \mathbbm{R}^{R\times T}$ and 
    ${\bfH} \in \mathbbm{C}^{R\times T}$ are the LOS and NLOS components, respectively,
    $\bfH_{\iid} \in \mathbbm{C}^{R\times T}$ represents the channel matrix of the spatially uncorrelated Rayleigh fading, whose entries are independent and identically distributed (i.i.d.) zero-mean complex Gaussian with unit variance, 
    $\bfU_{\bfR} \in \mathbbm{C}^{R\times R}$ and 
    $\bfU_{\bfT} \in \mathbbm{C}^{T\times T}$ are the deterministic matrices representing the unitary eigenbases of the receiver and transmitter, respectively, with the columns $\mathbf{u}_{\bfR, r}$ and $\mathbf{u}_{\bfT, t}$ denoting the corresponding $r^{\uth}$ receive and $t^{\uth}$ transmit spatial eigenmodes \cite{ozcelik2005makes, Weichselberger2006TWC}. 
The eigenbases are defined via the eigendecomposition of the one-sided spatial correlation matrices $\bfR \in \mathbbm{C}^{R\times R}$ and $\bfT \in \mathbbm{C}^{T\times T}$ as follows:
\begin{align}
\bfR &\triangleq \mean\left[ \bfH \bfH^{\tph} \right]
    = \bfU_{\bfR} \bfSigma_{\bfR} \bfU_{\bfR}^{\tph}, \\
\bfT &\triangleq \mean\left[ \bfH^{\tph} \bfH \right]
    = \bfU_{\bfT} \bfSigma_{\bfT} \bfU_{\bfT}^{\tph},
\end{align}
where matrices  
    $\bfSigma_{\bfR} = \diag{\boldsymbol{\sigma}_{\bfR}}$ and
    $\bfSigma_{\bfT} = \diag{\boldsymbol{\sigma}_{\bfT}}$ consist of the corresponding non-zero, real-valued eigenvalues with 
    $\boldsymbol{\sigma}_{\bfR} \triangleq \left[ \sigma_{\bfR, 1}, \sigma_{\bfR, 2}, \dots, \sigma_{\bfR, R} \right]^{\top}$ and 
    $\boldsymbol{\sigma}_{\bfT} \triangleq \left[ \sigma_{\bfT, 1}, \sigma_{\bfT, 2}, \dots, \sigma_{\bfT, T} \right]^{\top}$.
    
Moreover, matrix $\bfP = \mean\left[ \left| \bfU_{\bfR}^{\tph} \bfH \bfU_{\bfT} \right|^2 \right] \in \mathbbm{R}^{R\times T}$ denotes the power coupling matrix, which characterizes the average power transfer between the transmit and receive eigenmodes,
    where elements of $\bfP$ are related to the one-sided eigenvalues as shown below:
\begin{align}
\sum_{r=1}^{R}{ P_{r, t} } = \sigma_{\bfT, t}, 
\quad 
\sum_{t=1}^{T}{ P_{r, t} } = \sigma_{\bfR, r},
\label{eq:5}
\end{align}
where $\sigma_{\bfT, 1} \ge \sigma_{\bfT, 2} \!\ge \cdots \!\ge \sigma_{\bfT, T}$ and 
    $\sigma_{\bfR, 1} \!\ge \sigma_{\bfR, 2} \!\ge \cdots \!\ge \sigma_{\bfR, R}$ which are normalized as follows:
\begin{align}
\sum_{r=1}^{R} \sum_{t=1}^{T}{ P_{r, t} } = P_{\bfH},
\label{eq:6}
\end{align}
where $P_{\bfH} \triangleq \smean\left\{ \trace{\bfH \bfH^{\tph}} \right\} = \smean\left\{ \trace{\bfH^{\tph} \bfH} \right\}$ denotes the total mean power of the NLOS components, often set to $P_{\bfH} = RT$.

\begin{Remark}
In the traditional Kronecker model, the power coupling matrix is decomposed as
    $\bfP = \bfP_{\kron} = \frac{1}{P_{\bfH}} \boldsymbol{\sigma}_{\bfR} \boldsymbol{\sigma}_{\bfT}^{\top}$.
Consequently, the channel model is separable and can be formulated as shown below:
\begin{align}
{\bfH}_{\kron} = \frac{1}{\sqrt{P_{\bfH}}} \bfU_{\bfR} \bfSigma_{\bfR}^{\sfrac{1}{2}} \bfH_{\iid} \bfSigma_{\bfT}^{\sfrac{1}{2}} \bfU_{\bfT}^{\tph},
\label{eq:H_kron}
\end{align}
which is frequently referred to as the spatially double-correlated Rayleigh fading channel model \cite{bao2015asymptotic, hachem2008new}.
\end{Remark}

While the general Rician framework effectively captures environments with a dominant direct path, 
in the remainder of this paper, we will focus entirely on the classical Weichselberger framework \cite{Weichselberger2006TWC}, such that:
\begin{align}
    \bfH \triangleq \bfU_{\bfR} \left[ 
        \bfP^{\circ\frac{1}{2}} \odot \bfH_{\iid} 
    \right] \bfU_{\bfT}^{\tph}.
\label{eq:Weichselberger_H}
\end{align}

Focusing our rank-$1$ factorization above to $\bfP$ ensures the deterministic LOS component remains perfectly exact.
As a result, upon re-integrating the approximation of Eq. \eqref{eq:Weichselberger_H} into Eq. \eqref{eq:Rician_Weichselberger_H}, the resulting channel effectively collapses into a simpler and well-studied double-correlated Rician model without inducing additional errors into the direct LOS path.

In this approach, the instantaneous channel power gain is defined as follows:
\begin{subequations}
\label{eq:GH}
\begin{empheq}[left={\mathbf{G}_{\bfH} \triangleq \empheqlbrace}]{align}
    \mathbf{H}^{\tph} \mathbf{H}, & \quad R \ge T, \label{eq:GHa} \\
    \mathbf{H} \mathbf{H}^{\tph}, & \quad R < T. \label{eq:GHb}
\end{empheq}
\end{subequations}

Let us denote 
    $\mathbf{r}_{r} \in \mathbbm{R}^{T}$ and 
    $\mathbf{c}_{t} \in \mathbbm{R}^{R}$ as the $r^{\uth}$ row and $t^{\uth}$ column of $\bfP$, respectively. Then, 
the stochastic representations of $\bfG_{\bfH}$ in Eq. \eqref{eq:GHa} and Eq. \eqref{eq:GHb} are rewritten by the following:
\begin{align}
{\bfG}_{\bfH}
    &=
    \! \bfU_{\bfT} \!
    \sum_{r=1}^{R}
    \diag{\mathbf{r}_{r}}^{\frac{1}{2}} \!
    \bfH_{\iid, (r, :)}^{\tph} \bfH_{\iid, (r, :)}
    \diag{\mathbf{r}_{r}}^{\frac{1}{2}} \!
    \bfU_{\bfT}^{\tph}, \! \label{eq:GH_Stochastic_ROW}
\end{align}
for $R \ge T$ and:
\begin{align}
{\bfG}_{\bfH}
    &=
    \! \bfU_{\bfR} \!
    \sum_{t=1}^{T}
    \diag{\mathbf{c}_{t}}^{\frac{1}{2}} \!
    \bfH_{\iid, (:, t)} \bfH_{\iid, (:, t)}^{\tph}
    \diag{\mathbf{c}_{t}}^{\frac{1}{2}} \!
    \bfU_{\bfR}^{\tph}, \! \label{eq:GH_Stochastic_COL}
\end{align}
for $R < T$, where $\bfH_{\iid, (r, :)}$ and $\bfH_{\iid, (:, t)}$ denote the $r^{\uth}$ row and the $t^{\uth}$ column of $\bfH_{\iid}$, respectively.

\textbf{\textit{Permutation and zero-row/column truncation:}}
    As implied by Eq. \eqref{eq:GH_Stochastic_ROW}, the distribution of $\bfG_{\bfH}$ in the case $R \ge T$ is preserved by the removal of zero-valued rows from $\bfP$ and is invariant to the row permutations. 
    Similarly, in the case $R < T$, the distribution of $\bfG_{\bfH}$ is preserved by the removal of zero-valued columns from $\bfP$ and is invariant to the column permutations.
Note that simultaneously performing row and column permutations affects the distribution of the channel power gain.

In light of the above observations, and without loss of generality, we henceforth assume the following:
\begin{itemize}
\item {\textit{The coupling matrix $\bfP$ contains no all-zero rows or columns}}, as they can be ignored 
    without affecting the spectral distribution of $\bfG_{\bfH}$.
\item {\textit{The system operates in the regime where $R \ge T$}}, since the exact same analytical framework is applicable for the regime where $R \le T$
    by swapping the roles of the transmitter and receiver. 
\end{itemize}

\section{KLD-Enabled Rank-$1$ Factorization}

The nearest double-correlated Rayleigh fading model to the Weichselberger model under the Kullback–Leibler divergence (KLD) criterion is derived as follows:
\begin{align}
{\bfP}_{\kld}
    =  \argmin_{\bfalpha, \bfbeta} 
        \quad & {D}_{\rm KL} \left( {\bfH} \parallel \widetilde{\bfH} \right) \\
    &= \sum_{r=1}^{R} \sum_{t=1}^{T} 
        {D}_{\rm KL}\left( {H}_{r, t} \parallel \widetilde{H}_{r,t} \right), \label{eq:10}
\end{align}
where $\bfalpha \in \mathbbm{R}^{R}_{+}$,
    $\bfbeta \in \mathbbm{R}^{T}_{+}$, 
    and 
\begin{align}
    \widetilde{\bfH} \triangleq \bfU_{\bfR} \left[ \left( \bfalpha \bfbeta^{\top} \right)^{\circ\frac{1}{2}} \odot \widetilde{\bfH}_{\iid} \right] \bfU_{\bfT}^{\tph}
\end{align}
with entries of $\widetilde{\bfH}_{\iid} \in \mathbbm{C}^{R\times T}$ following i.i.d. zero-mean complex Gaussian with unit variance. 
    The objective function is derived as follows \cite{chen2025divergence}:
\begin{align}
J\left( \bfalpha, \bfbeta \right)
    =   - RT + \sum_{r=1}^{R} \sum_{t=1}^{T}
    \frac{ P_{r, t} }{ \alpha_r \beta_t } - \ln \left( \frac{ P_{r, t} }{ \alpha_r \beta_t } \right).
\label{eq:12}
\end{align}

Therefore, the problem becomes the rank-$1$ nonnegative matrix factorization (NMF) of the matrix $\bfP$ under the Itakura-Saito (IS) distance criterion \cite{fevotte2009nonnegative}.
    It is noted that the objective function is jointly convex under he logarithm transformation $\mathbf{x} = \ln\bfalpha$ and $\mathbf{y} = \ln\bfbeta$.
Using this property, we can generate the following sequences $\{ \bfalpha^{(\kappa)} \}$ and $\{ \bfbeta^{(\kappa)} \}$ iteratively. At each $\kappa^{\uth}$ iteration,
\begin{itemize}
    \item $\bfalpha^{(\kappa)}$ is found by solving ${\frac{\partial}{\partial \alpha_r} J\left( \bfalpha, \bfbeta^{(\kappa-1)} \right) = 0}$ for $\bfalpha$, which guarantees  
    ${J\left( \bfalpha^{(\kappa)}, \bfbeta^{(\kappa-1)} \right) \le J\left( \bfalpha^{(\kappa-1)}, \bfbeta^{(\kappa-1)} \right)}$;
    \item $\bfbeta^{(\kappa)}$ is found by solving $\frac{\partial}{\partial \beta_t} J\left( \bfalpha^{(\kappa)}, \bfbeta \right) = 0$ for $\bfbeta$, which guarantees 
    $J\left( \bfalpha^{(\kappa)}, \bfbeta^{(\kappa)} \right) \le J\left( \bfalpha^{(\kappa)}, \bfbeta^{(\kappa-1)} \right)$.
\end{itemize}

Consequently, the above procedure guarantees a monotonic decrease of the objective function at each iteration, such that:
\begin{align}
    J\left( \bfalpha^{(\kappa)}, \bfbeta^{(\kappa)} \right) 
    \le J\left( \bfalpha^{(\kappa-1)}, \bfbeta^{(\kappa-1)} \right).
\end{align}

Of note, solving ${\frac{\partial J\left( \bfalpha, \bfbeta \right)}{\partial \alpha_r} \!= 0}$ and 
    $\frac{\partial J\left( \bfalpha, \bfbeta \right)}{\partial \beta_t} \!= 0$ yields the following vector forms:
%{subequations}
\begin{align}
\bfalpha = \frac{1}{T} {\bfP} \bfbeta^{\circ-1}, \quad
\bfbeta = \frac{1}{R} {\bfP}^{\top} \bfalpha^{\circ-1},
\label{eq:14}
\end{align}
respectively.

The procedure to solve the problem in Eq. \eqref{eq:10} can be summarized in Algorithm \ref{alg:optimal_rank1_nmf}.

\begin{algorithm}[!h]
\caption{Optimal Rank-$1$ Approximation with IS-NMF Divergence\label{alg:optimal_rank1_nmf}}
\begin{algorithmic}[1]
\STATE \textbf{Initialization}: Set $\kappa = 0$ and $J^{(0)} = \infty$. Choose $\bfalpha^{(0)} = \mathbf{1}_R$, and initialize $\bfbeta^{(0)} = \frac{1}{R} \mathbf{P}^{\top} \left( \bfalpha^{(0)} \right)^{\circ-1}$;
\REPEAT
    \STATE Set $\kappa \gets \kappa + 1$;
    \STATE Update $\bfalpha^{(\kappa)} \gets \frac{1}{T} \mathbf{P} \left( \bfbeta^{(\kappa-1)} \right)^{\circ-1}$;
    \STATE Update $\bfbeta^{(\kappa)} \gets \frac{1}{R} \mathbf{P}^{\top} \left( \bfalpha^{(\kappa)} \right)^{\circ-1}$;
    \STATE Update $J^{(\kappa)} \gets J\left( \bfalpha^{(\kappa)}, \bfbeta^{(\kappa)} \right)$;
\UNTIL{$\left| J^{(\kappa-1)} - J^{(\kappa)} \right| < \epsilon$}
\STATE Compute ${\mathbf{P}}_{\kld} := \bfalpha^{(\kappa)} \left( \bfbeta^{(\kappa)} \right)^{\top}$;
\end{algorithmic}
\end{algorithm}

    Importantly, the computational complexity of the above procedure is $O(RT)$ per iteration.
Since $J\left( \bfalpha, \bfbeta \right) \ge 0$ and it monotonically decreases with each updated $\bfalpha^{(\kappa)}$ and $\bfbeta ^{(\kappa)}$, the algorithm is guaranteed to converge to a stationary point.
    Moreover, the objective function becomes jointly convex the logarithm transformation $\mathbf{x} = \ln\bfalpha$ and $\mathbf{y} = \ln\bfbeta$. 
Therefore, every stationary point is globally optimal, implying that Algorithm \ref{alg:optimal_rank1_nmf} converges to the global optimal rank-$1$ approximation
${\mathbf{P}}_{\kld} = \bfalpha^{\star} \left( \bfbeta^{\star} \right)^{\top}$ under KLD criterion with equality $J^{\star} = 0$ if and only if $\bfP$ is already rank-$1$.
    
\section{MMM-Enabled Rank-$1$ Factorization}
\label{sec:rank_1_Approx_via_mmm}

In this section, we adopt the MMM to map the exact channel statistics onto a simplified model via their moments and/or cumulants \cite{golub2009matrices}. 
    However, a direct application of MMM appears to be ineffective, as forcing a non-separable channel into a separable channel limits the order of moments that can be matched. 
As a workaround, we first approximate normalized alternatives of $\bfH$ and then recover the desired channel approximation.
    The normalization aims to remove dominant row and column power imbalances from $\bfP$, yielding a variance profile that is closer to an isotropic Wishart structure and, therefore, more amenable to moment/cumulant matching.
The proposed approximation framework consists of the following three main steps:

\begin{itemize}
    \item[1)] {\it Channel normalization.} We first apply scaling matrices to yield a normalized power coupling matrix:
    \begin{align}
        {\bfP}_{\sf Normalized} 
            =   \diag{\boldsymbol{s}_{\bfR}}^{-1} \bfP \diag{\boldsymbol{s}_{\bfT}}^{-1}. 
        \label{eq:P_normalized}
    \end{align}
    where
        $\diag{\boldsymbol{s}_{\bfT}} \in \mathbbm{R}^{T\times T}$ and 
        $\diag{\boldsymbol{s}_{\bfR}} \in \mathbbm{R}^{R\times R}$ are~predefined diagonal transmit and receive scaling matrices.
    The channel gain corresponding to the above variance profile can be written as $\bfU_{\bfT} \bfW \bfU_{\bfT}^{\tph}$ with:
    \begin{align}
    \bfW &\triangleq 
        \sum_{r=1}^{R}  
        \bfSigma_r^{\frac{1}{2}} 
        \bfH_{\iid, (r, :)} \bfH_{\iid, (r, :)}^{\tph}
        \bfSigma_r^{\frac{1}{2}} \\
        &\triangleq \sum_{r=1}^{R}{ \bfW_r }, \notag
    \label{eq:bfW_def}
    \end{align}
    with $\bfSigma_{r} \!\triangleq \frac{1}{[\boldsymbol{s}_{\bfR}]_{r}} \diag{\mathbf{r}_{r}} \diag{\boldsymbol{s}_{\bfT}}^{-1}$.
    Each $\bfW_r \!\in \mathbbm{C}^{T\times T}$ component follows a singular complex Wishart distribution with a single degree of freedom and a diagonal covariance matrix $\bfSigma_{r}$.
    \item[2)] {\it Wishart approximation.} We apply the MMM to match the eigenvalue moments of $\bfW$ to those of a more tractable, Wishart-distributed matrix
    $\widetilde{\bfW} \in \mathbbm{C}^{T\times T}$, defined as follows:
    \begin{align}
        \widetilde{\bfW} = \widetilde{\bfH}_{\iid}^{\tph} {\bfSigma}_{\bfgamma} \widetilde{\bfH}_{\iid},
    \end{align}
    where ${\bfSigma}_{\bfgamma} \in \mathbbm{R}^{R\times R}$ is a strictly diagonal matrix whose diagonal elements are given by vector $\bfgamma \in \mathbbm{R}^{R}$.
    \item[3)] {\it Channel recovery.} 
    Once parameter vector $\bfgamma$ is obtained, which is equivalent to obtaining the rank-$1$ factorization for~the matrix 
    ${\bfP}_{\sf Normalized}$, the MMM-enabled rank-$1$ factorization of $\bfP$ is formulated as follows:
    \begin{align}
        {\bfP}_{\mmm}
            =   \diag{\boldsymbol{s}_{\bfR}} \bfgamma \boldsymbol{s}_{\bfT}^{\top}.
    \end{align}
    The MMM-enabled channel matrix corresponding to the above variance profile is then recovered as shown below:
    \begin{align}
    \widetilde{\bfH}
        =   \bfU_{\bfR} \left[ 
            {\bfP}_{\mmm}^{\circ \frac{1}{2}} \odot \widetilde{\bfH}_{\iid}
        \right] \bfU_{\bfT}^{\tph}.
    \end{align}
    %s
\end{itemize}

While the exact moments (or cumulant) related to $\bfW$ are intractable and, to the best of our knowledge, have not yet been reported in the technical literature, they are essential for the proposed matching procedure.
    In the next section, we address the problem of obtaining these exact moments and cumulants.

\subsection{Moments and Cumulants of the Normalized Channel Gain Matrix}
\label{sec:moment_of_matrix}

The objective of this section is to derive the eigenvalue cumulants, denoted as $\skmul_n[\uplambda_{\bfW}]$, of the normalized channel gain matrix ($\bfW$), which constitute the key quantities required by the proposed MMM framework. To this end, the derivation proceeds according to the following steps:
\begin{itemize}
    \item[-] Derive the matrix-valued moments ($\mean[\bfW_r^n]$) of the individual Wishart components ($\{ \bfW_r\}$) based on Lemma \ref{lem:exact_moment_Wk}.
    \item[-] Derive the corresponding matrix-valued cumulants ($\kmul_n[\bfW_r]$) based on Lemma \ref{lem:moment_cumulant_Wishart}.
    \item[-] Derive the matrix-valued cumulants ($\kmul_n[\bfW]$) of $\bfW = \sum_{r=1}^{R} \bfW_r$ from $\{ \kmul_n[\bfW_r]\}$ based on Theorem \ref{theo:mat_cumulant_W}.
    \item[-] Recover the matrix-valued moments ($\mean[\bfW^n]$) from the matrix-valued cumulants ($\kmul_n[\bfW]$) based on Lemma \ref{lem:moment_cumul_Wtol}.
    \item[-] Derive the eigenvalue moments ($\mean[\uplambda_{\bfW}^n]$) and the corresponding eigenvalue cumulants ($\kmul_n[\uplambda_{\bfW}]$), which are subsequently used in the moment matching procedure developed in Section \ref{sec:approximate_complex_wishart}.
\end{itemize}

\begin{Lemma}
\label{lem:exact_moment_Wk}
The $n^{\uth}$ order matrix-valued moment of the complex Wishart matrix $\bfW_r$ with one degree of freedom and a diagonal covariance ${\bfSigma}_{r}$, is formulated as follows:
\begin{align}
\mean[\bfW_r^n]
    =  \sum_{j=1}^{n}{ \frac{(n-1)!}{(n-j)!} c^{(n-j)}_r {\bfSigma}_{r}^j },
\label{eq:exact_moment_Wk}
\end{align}
where the scalar coefficient $c^{(n-j)}_k $ is derived as:
\begin{align}
c^{(n-j)}_r 
    &\triangleq \smean[\trace{\bfW_r}^{n-j}] \label{eq:ck_trace} \\
    &=  \sum_{i=1}^{n-j} \frac{(n-j-1)!}{(n-j-i)!} \trace*{{\bfSigma}_{r}^{i}} c^{(n-j-i)}_r. \label{eq:ck_recursive}
\end{align}
\end{Lemma}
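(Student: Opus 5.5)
The plan is to exploit the rank-one structure of $\bfW_r$ together with Gaussian integration by parts (Stein's lemma). Write $\mathbf{x} \triangleq \bfSigma_r^{\frac{1}{2}}\mathbf{h}$, where $\mathbf{h}\in\mathbbm{C}^{T}$ collects the (conjugated, transposed) entries of the row $\bfH_{\iid,(r,:)}$, so that $\mathbf{x}\sim\mathcal{CN}(\mathbf{0},\bfSigma_r)$ and $\bfW_r = \mathbf{x}\mathbf{x}^{\tph}$. Set $\tau \triangleq \mathbf{x}^{\tph}\mathbf{x} = \trace{\bfW_r}$. Since $\bfW_r$ has rank one, $\bfW_r^n = \tau^{n-1}\mathbf{x}\mathbf{x}^{\tph}$. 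Hence the target quantity is $\mathbf{M}_n \triangleq \mean[\tau^{n-1}\mathbf{x}\mathbf{x}^{\tph}]$, and the scalars $c_r^{(m)} = \smean[\tau^m]$ are exactly as in \eqref{eq:ck_trace}, with $c_r^{(0)}=1$.

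The key step is a one-step recursion for $\mathbf{M}_n$. For a circularly symmetric complex Gaussian vector, Stein's identity reads
\begin{align}
\smean[x_i\, g(\mathbf{x})] = \sum_k [\bfSigma_r]_{i,k}\, \smean\!\left[\partial g/\partial \bar{x}_k\right].
\end{align}
Apply it with $g = \bar{x}_j \tau^{n-1}$ and $\tau=\sum_k x_k\bar{x}_k$, which gives
\begin{align}
\partial g/\partial\bar{x}_k = \delta_{jk}\tau^{n-1} + (n-1)\bar{x}_j x_k \tau^{n-2}.
\end{align}
Assembling entries yields
\begin{align}
\mathbf{M}_n = c_r^{(n-1)}\bfSigma_r + (n-1)\bfSigma_r \mathbf{M}_{n-1}, \qquad \mathbf{M}_1=\bfSigma_r.
\end{align}
Because $\bfSigma_r$ is diagonal, and by induction $\mathbf{M}_{n-1}$ is a polynomial in $\bfSigma_r$, all terms commute. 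Unrolling the recursion $n-1$ times, the coefficient of $\bfSigma_r^j$ picks up the product $(n-1)(n-2)\cdots(n-j+1) = \frac{(n-1)!}{(n-j)!}$ multiplying $c_r^{(n-j)}$. This is \eqref{eq:exact_moment_Wk}, and I would verify it formally by induction on $n$.

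For the scalar recursion \eqref{eq:ck_recursive}, take the trace of \eqref{eq:exact_moment_Wk} with $n$ replaced by $m=n-j$. Since $\trace{\mathbf{M}_m} = \smean[\tau^{m-1}\mathbf{x}^{\tph}\mathbf{x}] = \smean[\tau^m] = c_r^{(m)}$, we get
\begin{align}
c_r^{(m)} = \sum_{i=1}^{m}\frac{(m-1)!}{(m-i)!}\trace*{\bfSigma_r^i}\, c_r^{(m-i)},
\end{align}
which is exactly the stated recursion.

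The main obstacle is justifying the complex Stein identity with the correct Wirtinger-derivative convention and circularity assumptions, so that no spurious pseudo-covariance term appears. Since $\mathbf{x}$ has all moments finite and $g$ is polynomial, integrability is not an issue. As a fallback, I would check the recursion directly via Isserlis' theorem: expand $\mean[\tau^{n-1}x_i\bar{x}_j]$, and isolate the pairing of $x_i$ with either $\bar{x}_j$ or one of the $n-1$ conjugates inside $\tau^{n-1}$.
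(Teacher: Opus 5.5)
Your proof is correct, and it takes a different route from the paper.

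\textbf{The paper's route.} It works entrywise and combinatorially:
\begin{itemize}
\item It invokes the Graczyk--Letac--Massam permutation expansion of $\smean[\prod_j W_{r,(k_j,k_{j+1})}]$.
\item It uses diagonality of $\bfSigma_r$ to kill the off-diagonal entries and to collapse the cycle through $k_1$ into a power $\Sigma_{r,(k_1,k_1)}^j$.
\item It counts the $\binom{n-1}{j-1}(j-1)!=\frac{(n-1)!}{(n-j)!}$ ways to form that cycle.
\item It identifies the leftover sum over $S_{n-j}$ with $\smean[\textrm{Tr}(\bfW_r)^{n-j}]$ via a second GLM result.
\end{itemize}
The scalar recursion \eqref{eq:ck_recursive} then needs a separate, analogous cycle count.

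\textbf{Your route.} You use the rank-one identity $\bfW_r^n=\tau^{n-1}\bfW_r$ and one application of the complex Stein identity. Your Wirtinger convention is the right one, and the identity also holds for singular $\bfSigma_r$ by pulling back through $\mathbf{x}=\bfSigma_r^{1/2}\mathbf{h}$. This yields the first-order recursion $\mathbf{M}_n=c_r^{(n-1)}\bfSigma_r+(n-1)\bfSigma_r\mathbf{M}_{n-1}$. From it, \eqref{eq:exact_moment_Wk} follows by induction, and \eqref{eq:ck_recursive} follows immediately by taking traces, since $\textrm{Tr}\,\mathbf{M}_m=c_r^{(m)}$.

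\textbf{What each buys.} Your argument is shorter and self-contained. It also shows that diagonality of $\bfSigma_r$ is not needed for this lemma: $\bfSigma_r\mathbf{M}_{n-1}$ is automatically a polynomial in $\bfSigma_r$, so your appeal to diagonality for commutation is superfluous. Diagonality only becomes essential later, in Lemma~\ref{lem:moment_cumulant_Wishart}. The paper's permutation-expansion route, in exchange, sits inside a framework that extends to complex Wishart matrices with more than one degree of freedom, where the rank-one shortcut is unavailable.

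Your explicit base case $c_r^{(0)}=1$ is also the correct reading of \eqref{eq:ck_recursive}. Taken literally at $n-j=0$, that recursion gives an empty sum, i.e.\ $0$ rather than $1$.
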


\begin{IEEEproof}
The detailed proof is provided in Appendix \ref{apx:lem:exact_moment_Wk} and is based on the combinatorial framework established in \cite{graczyk2003complex}.
\end{IEEEproof}

Although Lemma \ref{lem:exact_moment_Wk} provides the exact expression of $\mean[\bfW_r^n]$, our objective is to characterize the statistics of $\bfW = \sum_{r=1}^{R}{ \bfW_{r} }$.
    Since the moments of $\bfW$ cannot be readily obtained from $\bfW_{r}$, we next derive the corresponding matrix-valued cumulants of $\bfW_r$, denoted by $\kmul_{n}[\bfW_r]$.
    
\begin{Lemma}
\label{lem:moment_cumulant_Wishart}
The recursive formula that relates the $n^{\uth}$ order matrix-valued moments and cumulants of $\bfW_r$ is obtained as follows:
\begin{align}
\kmul_{n}[\bfW_r]
    =   \mean[\bfW_r^{n}] - \sum_{\ell=1}^{n-1}{ \binom{n-1}{\ell-1} \kmul_{\ell}[\bfW_r] \mean[\bfW_r^{n-\ell}] }.
\label{eq:moment_cumulant_Wishart}
\end{align}
\end{Lemma}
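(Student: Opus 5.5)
The plan is to obtain Eq. \eqref{eq:moment_cumulant_Wishart} as the coefficient-wise form of the identity relating a matrix-valued moment generating function to its cumulant generating function. The definition is chosen so that the matrix cumulants behave like classical cumulants, in particular so that they add over the independent components $\bfW_r$ later on. Define the formal power series
\begin{align}
\bfD_r(s) \triangleq \sum_{n=0}^{\infty} \frac{s^n}{n!}\, \mean[\bfW_r^n], \qquad \mean[\bfW_r^0] = \bfI_T .
\end{align}
By Lemma \ref{lem:exact_moment_Wk}, every $\mean[\bfW_r^n]$ is a polynomial in the diagonal matrix $\bfSigma_r$ with scalar coefficients $c_r^{(n-j)}$. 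Hence all coefficients of $\bfD_r(s)$ are diagonal and mutually commute.

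The key structural step is this commutativity. It lets me work inside the commutative algebra of diagonal $T\times T$ matrices, where $\bfD_r(s)$ is just a $T$-tuple of scalar formal power series, each with constant term $1$. In that algebra I define
\begin{align}
\bfC_r(s) \triangleq \ln \bfD_r(s) = \sum_{n=1}^{\infty} \frac{s^n}{n!}\, \kmul_n[\bfW_r]
\end{align}
entrywise on the diagonal, and take its coefficients as the matrix-valued cumulants. The logarithm exists as a formal series because the constant term is $\bfI_T$, so no convergence question arises.

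Differentiating gives $\bfD_r'(s) = \bfC_r'(s)\,\bfD_r(s)$, with no ordering ambiguity thanks to commutativity. I then extract the coefficient of $s^{n-1}/(n-1)!$ on both sides. The left side gives $\mean[\bfW_r^n]$. The right side is a Cauchy product of exponential generating series, which yields
\begin{align}
\mean[\bfW_r^n] = \sum_{\ell=1}^{n} \binom{n-1}{\ell-1} \kmul_\ell[\bfW_r]\, \mean[\bfW_r^{n-\ell}] .
\end{align}
Isolating the $\ell=n$ term, which equals $\kmul_n[\bfW_r]$ since $\mean[\bfW_r^0]=\bfI_T$, gives Eq. \eqref{eq:moment_cumulant_Wishart}. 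The recursion then determines all $\kmul_n[\bfW_r]$ uniquely from the moments of Lemma \ref{lem:exact_moment_Wk}, starting from $\kmul_1[\bfW_r]=\mean[\bfW_r]=\bfSigma_r$.

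The main obstacle is the modelling choice rather than the algebra: for general random matrices, $\ln \mean[e^{s\bfW}]$ is ill-behaved because the moments need not commute, and the product order in the recursion matters. I would handle this by invoking the diagonal structure from Lemma \ref{lem:exact_moment_Wk}. I would also remark that the placement $\kmul_\ell\,\mean[\,\cdot\,]$ versus $\mean[\,\cdot\,]\,\kmul_\ell$ is immaterial here. The additivity needed in Theorem \ref{theo:mat_cumulant_W} should likewise be justified by this commutativity together with the independence of the $\bfW_r$, rather than assumed.
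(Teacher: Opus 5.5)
Your proof is correct and follows essentially the same route as the paper: both use the diagonality of $\mean[\bfW_r^n]$ from Lemma~\ref{lem:exact_moment_Wk} to reduce $\mathbf{\Phi}_{\bfW_r}=e^{\mathbf{\Psi}_{\bfW_r}}$ to scalar diagonal entries and then differentiate. The paper does this with the characteristic function in $\mathbbm{i}t$ and the generalized Leibniz rule, while you extract coefficients from a formal exponential generating series, which has the small advantage of sidestepping convergence of the moment series. One caution about your remarks on additivity: diagonal expectations plus independence do \emph{not} make the cumulants additive over $r$, because the random $\bfW_r$ themselves do not commute and so $\mean[e^{s\sum_r\bfW_r}]\neq\prod_r\mean[e^{s\bfW_r}]$ in general, which is exactly why Theorem~\ref{theo:mat_cumulant_W} carries the residual $\bfTheta_n$.
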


\begin{IEEEproof}
We define the characteristic function (CF) and the cumulant generating function (CGF) of $\bfW_r$ as follows:
\begin{align}
{\mathbf{\Phi}}_{\bfW_r}(t)
    &\triangleq  \mean\left\{ e^{\mathbbm{i} t \bfW_r} \right\}
     \stackrel{(a)}{=}  
     \sum_{n=0}^{\infty}{ \frac{(\mathbbm{i} t)^n}{n!} \mean[\bfW_r^{n}] }, \label{eq:Phi_def} \\
{\mathbf{\Psi}}_{\bfW_r}(t)
    &\triangleq  \log {\mathbf{\Phi}}_{\bfW_r}(t).
    \label{eq:Psi_def}
\end{align}

Here, it is important to distinguish between the element-wise exponential and 
    matrix exponential, defined as $e^{\bfX} \triangleq \sum_{n=0}^{\infty} \frac{1}{n!} \bfX^n$ %\cite{kumar1965expanding}
    , that we used in equality $(a)$, as well as between the element-wise logarithm and matrix logarithm, 
        defined as $\log\bfX \triangleq \sum_{k=1}^{\infty} (-1)^{k+1} \frac{(\bfX - \bfI)^k}{k}$.

From Lemma \ref{lem:exact_moment_Wk}, we deduce that $\mean[\bfW_r^{n}]$ is strictly diagonal for any order $n$. 
Therefore, ${\mathbf{\Phi}}_{\bfW_r}(t)$ and ${\mathbf{\Psi}}_{\bfW_r}(t)$ are also diagonal. 
    Moreover, the matrix-valued cumulants, defined as $\kmul_{n}[\bfW_r]
    =   \left. \frac{1}{\mathbbm{i}^n} \frac{\partial^n {\mathbf{\Psi}}_{\bfW_r}(t)}{\partial t^n} \right|_{t\to 0}$, are thus also strictly diagonal. 

Since ${\mathbf{\Phi}}_{\bfW_r}(t) = e^{\mathbf{\Psi}_{\bfW_r}(t)}$ and both matrices are diagonal, the matrix exponential coincides with the scalar exponential exactly along the diagonal entries, i.e., $[\mathbf{\Phi}_{\bfW_r}(t)]_{i, i} = e^{[\mathbf{\Psi}_{\bfW_r}(t)]_{i, i}}$. 
    Therefore, we can directly apply the classical moment-cumulant relation to each diagonal element. 
    This can be done through the use of the generalized Leibniz's rule
    \cite[Eq. (4)]{smith1995recursive}:
\begin{align}
\frac{ \partial^n [\mathbf{\Phi}_{\bfW_r}(t)]_{i, i} }{\partial t^n}
    =  \sum_{\ell=1}^{n} \binom{n-1}{\ell-1} 
    \frac{\partial^{\ell} [\mathbf{\Psi}_{\bfW_r}(t)]_{i, i} }{\partial t^{\ell}} \frac{\partial^{n-\ell} [\mathbf{\Phi}_{\bfW_r}(t)]_{i, i}}{\partial t^{n-\ell}}.
\end{align}

Evaluating the above expression as $t \to 0$ yields the moment-cumulant relation for each diagonal entry:
\begin{align}
(\mean[\bfW_r^{n}])_{i, i}
    =  \sum_{\ell=1}^{n}{ \binom{n-1}{\ell-1} (\kmul_{\ell}[\bfW_r])_{i, i} (\mean[\bfW_r^{n-\ell}])_{i, i} }.
\end{align}

Since $\mean[\bfW_r^{n}]$ and $\kmul_{n}[\bfW_r]$ are diagonal matrices with zero off-diagonal entries, dropping the indices $(i, i)$ yields the matrix form in Eq. \eqref{eq:moment_cumulant_Wishart}. 
    This completes the proof of Lemma \ref{lem:moment_cumulant_Wishart}.
\end{IEEEproof}

Importantly, from Eq. \eqref{eq:moment_cumulant_Wishart}, diagonal cumulants yield diagonal matrix-valued moments; conversely, diagonal moments guarantee diagonal cumulants. The matrix-valued moments of $\bfW$ admit an analogous recursive relation, as established below.

\begin{Lemma}
\label{lem:moment_cumul_Wtol}
The matrix-valued moments of $\bfW$ are strictly diagonal and are obtained recursively from its matrix-valued cumulants as follows:
\begin{align}
\mean\left[ \bfW^n \right]
    =  \kmul_n\left[ \bfW \right] + \sum_{\ell=1}^{n-1}{ \binom{n-1}{\ell-1} \kmul_{\ell}\left[ \bfW \right] \mean\left[ \bfW^{n-\ell} \right] }.
\label{eq:moment_cumul_Wtol}
\end{align}
\end{Lemma}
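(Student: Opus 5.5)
The argument mirrors the proof of Lemma \ref{lem:moment_cumulant_Wishart}, with one extra step: first show that the matrix-valued moments of $\bfW=\sum_{r=1}^{R}\bfW_r$ are strictly diagonal, then apply the scalar Leibniz moment–cumulant recursion entry by entry.

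\emph{Step 1: diagonality.} Write $\bfW_r = \bfSigma_r^{\frac12}\mathbf{h}_r\mathbf{h}_r^{\tph}\bfSigma_r^{\frac12}$ with $\mathbf{h}_r = \bfH_{\iid,(r,:)}^{\tph}$. The vectors $\mathbf{h}_1,\dots,\mathbf{h}_R$ are independent, and their entries are i.i.d. circularly symmetric complex Gaussian. For any diagonal unitary $\bfD = \diag{e^{\mathbbm{i}\theta_1},\dots,e^{\mathbbm{i}\theta_T}}$, the vector $\bfD\mathbf{h}_r$ has the same law as $\mathbf{h}_r$. Since $\bfSigma_r$ is diagonal, it commutes with $\bfD$, so $\bfD\bfW_r\bfD^{\tph}\stackrel{d}{=}\bfW_r$ jointly over $r$. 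Summing gives $\bfD\bfW\bfD^{\tph}\stackrel{d}{=}\bfW$. Hence $\mean[\bfW^n]=\bfD\,\mean[\bfW^n]\bfD^{\tph}$ for every choice of phases $\theta_i$. Averaging over independent uniform phases sets every off-diagonal entry to zero, so $\mean[\bfW^n]$ is diagonal for all $n$.

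\emph{Step 2: diagonal CF and CGF.} Define ${\mathbf{\Phi}}_{\bfW}(t)=\mean\{e^{\mathbbm{i}t\bfW}\}=\sum_n \frac{(\mathbbm{i}t)^n}{n!}\mean[\bfW^n]$ and ${\mathbf{\Psi}}_{\bfW}(t)=\log{\mathbf{\Phi}}_{\bfW}(t)$, as in Eqs. \eqref{eq:Phi_def}–\eqref{eq:Psi_def}. By Step 1, ${\mathbf{\Phi}}_{\bfW}(t)$ is diagonal. The matrix logarithm of a diagonal matrix near $\bfI$ acts entrywise on the diagonal, so ${\mathbf{\Psi}}_{\bfW}(t)$ is diagonal. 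Therefore $\kmul_n[\bfW]=\mathbbm{i}^{-n}\partial_t^n{\mathbf{\Psi}}_{\bfW}(t)|_{t=0}$ is diagonal, and $[{\mathbf{\Phi}}_{\bfW}]_{i,i}=e^{[{\mathbf{\Psi}}_{\bfW}]_{i,i}}$ holds entrywise.

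\emph{Step 3: recursion.} Differentiate $\phi=e^{\psi}$, which gives $\phi'=\psi'\phi$. Applying the generalized Leibniz rule \cite[Eq. (4)]{smith1995recursive} to $\partial^{n-1}(\psi'\phi)$ gives
$$\phi^{(n)}=\sum_{\ell=1}^{n}\binom{n-1}{\ell-1}\psi^{(\ell)}\phi^{(n-\ell)}.$$
Evaluate at $t=0$ and divide by $\mathbbm{i}^n$. Isolate the $\ell=n$ term, using $\phi^{(0)}(0)=1$; it contributes $(\kmul_n[\bfW])_{i,i}$. This gives the diagonal entries of Eq. \eqref{eq:moment_cumul_Wtol}. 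Because every matrix involved is diagonal, the entrywise identities assemble into the stated matrix form.

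\emph{Main obstacle.} The only nontrivial point is Step 1. Each $\bfW_r$ itself is not diagonal, since it is rank one, so diagonality of $\mean[\bfW^n]$ cannot be inherited termwise from Lemma \ref{lem:exact_moment_Wk} by multiplying matrices. Expanding $\mean[(\sum_r\bfW_r)^n]$ produces cross products whose diagonality is not obvious from that lemma. The phase-invariance argument avoids this expansion, provided it is applied jointly to all $r$ with the same $\bfD$, which is legitimate because the $\mathbf{h}_r$ are independent.
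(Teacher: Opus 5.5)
Your proof is correct. Its last step, the entrywise generalized Leibniz rule for $\phi=e^{\psi}$ with the $\ell=n$ term isolated, is exactly what the paper does by reusing the proof of Lemma~\ref{lem:moment_cumulant_Wishart}. The genuine difference is how you establish diagonality.

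The paper disposes of diagonality in one sentence, asserting that the diagonal matrices $\mean[\bfW_r^n]$ and $\kmul_n[\bfW_r]$ ``yield'' diagonal $\mean[\bfW^n]$ and $\kmul_n[\bfW]$. As you point out, that inheritance is not automatic. $\mean[\bfW^n]$ contains mixed terms such as $\mean[\bfW_i\bfW_j\bfW_i]=\mean[\bfW_i\bfSigma_j\bfW_i]$, which depend on more of the law of $\bfW_i$ than its matrix moments. In general one can have independent summands with every $\mean[\bfW_r^n]$ diagonal whose sum nevertheless has a non-diagonal third moment. The paper only checks such cross terms at third order, through the explicit computation of $\mean[\bfW_i\bfW_j\bfW_i]$ in Appendix~\ref{apx:mat_cumulant_W}.

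Your invariance argument says that $\bfD\bfW\bfD^{\tph}$ has the same law as $\bfW$ for every diagonal unitary $\bfD$. This holds because the rows of $\bfH_{\iid}$ are independent and circularly symmetric, and each $\bfSigma_r$ is diagonal. It uses precisely the structural property of the Gaussian model that the paper's sentence leaves implicit, and it settles all orders at once without expanding $(\sum_r\bfW_r)^n$. As a by-product, it shows that every residual $\bfTheta_n=\kmul_n[\bfW]-\kmul_n^{+}[\bfW]$ in Theorem~\ref{theo:mat_cumulant_W} is diagonal, which the paper never verifies beyond $n=3$. The paper's route is shorter; yours is the one that actually justifies the claim.

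A small refinement: apply the same invariance directly to $e^{\mathbbm{i}t\bfW}$. This gives diagonality of $\mathbf{\Phi}_{\bfW}(t)$ for every real $t$ without passing through the moment series, which has only a finite radius of convergence. That limitation is harmless here, since everything is evaluated at $t\to 0$.
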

\begin{IEEEproof}
The proof is similar to the proof of Lemma \ref{lem:moment_cumulant_Wishart} since the diagonal matrices $\mean[\bfW_r^n]$ and $\kmul_n[\bfW_r]$ yield diagonal matrices $\mean[\bfW^n]$ and $\kmul_n[\bfW]$.
\end{IEEEproof}

Having derived the matrix-valued cumulants $\bfK_{n}[\bfW_r]$, we next consider the cumulants of $\bfW = \sum_{r=1}^{R}{\bfW_r}$.
    The cumulants of $\bfW$ cannot be obtained by simply summing $\{ \bfK_{n}[\bfW_r] \}$ because $\{ \bfW_r \}$ do not commute. 
By leveraging the multivariable Zassenhaus formula, the matrix-valued cumulants $\bfK_n[\bfW]$ can be expressed as the sum of an additive contribution and a residual term accounting for matrix non-commutativity, as stated in Theorem \ref{theo:mat_cumulant_W}.

\begin{Theorem}
\label{theo:mat_cumulant_W}
The $n^{\uth}$ order matrix-valued cumulant of $\bfW = \sum_{r=1}^{R}{\bfW_r}$ 
consists of the sum of the matrix-valued cumulants of $\bfW_r$ and a residual matrix, obtained as follows:
\begin{align}
\kmul_{n}[\bfW]
    =   \underbrace{\sum_{r=1}^{R}{ \kmul_{n}[\bfW_r] }}\limits_{\triangleq \kmul_{n}^{+}[\bfW]}
    + \bfTheta_{n},
\label{eq:mat_cumulant_W}
\end{align}
with $\bfTheta_{1} = \bfTheta_{2} = \mathbf{0}_{T\times T}$ and $\bfTheta_{3}$ being obtained as shown below:
\begin{align}
\bfTheta_3
    &=   \textstyle
    \sum_{1\le i < j \le R} \trace{\bfSigma_i \bfSigma_j} \bfSigma_i - \trace{\bfSigma_i} \bfSigma_i \bfSigma_j
    \notag\\
    &\quad 
    \textstyle
    +   \sum_{1\le i < j \le R} \trace{\bfSigma_i \bfSigma_j} \bfSigma_j - \trace{\bfSigma_j} \bfSigma_i \bfSigma_j.
\label{eq:bfTheta_3}
\end{align}
\end{Theorem}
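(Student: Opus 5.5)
The plan is to work directly from the moment--cumulant recursion of Lemma \ref{lem:moment_cumul_Wtol}, which expresses $\kmul_n[\bfW]$ through $\mean[\bfW^\ell]$ and $\kmul_\ell[\bfW]$ for $\ell\le n$. I would not manipulate the Zassenhaus expansion of $\log\mean\{e^{\mathbbm{i}t\bfW}\}$ in full generality. Concretely, for $n=1,2,3$ I would:
\begin{itemize}
\item[1)] expand $\mean[\bfW^n]=\sum_{r_1,\dots,r_n}\mean[\bfW_{r_1}\cdots\bfW_{r_n}]$ using the independence of the $\bfW_r$;
\item[2)] insert this into Eq. \eqref{eq:moment_cumul_Wtol};
\item[3)] compare the result with $\kmul_n^{+}[\bfW]=\sum_r\kmul_n[\bfW_r]$, where each $\kmul_n[\bfW_r]$ is obtained from Lemmas \ref{lem:exact_moment_Wk} and \ref{lem:moment_cumulant_Wishart}.
\end{itemize}
The guiding idea is that if every mixed expectation factorized as a product of commuting diagonal matrices, then the diagonal entries of $\bfW$ would behave like sums of independent scalars, and cumulants would be additive. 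The residual $\bfTheta_n$ is exactly the error in that factorization.

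The key tool is the complex Isserlis (Wick) identity for $\mathbf{h}\sim\mathcal{CN}(\mathbf{0},\bfI)$:
\begin{align}
\mean[\mathbf{h}\mathbf{h}^{\tph}\mathbf{A}\mathbf{h}\mathbf{h}^{\tph}]=\mathbf{A}+\trace{\mathbf{A}}\bfI .
\end{align}
Writing $\bfW_i=\bfSigma_i^{1/2}\mathbf{h}\mathbf{h}^{\tph}\bfSigma_i^{1/2}$, this gives, for any deterministic diagonal $\bfD$,
\begin{align}
\mean[\bfW_i\bfD\bfW_i]=\bfSigma_i\bfD\bfSigma_i+\trace{\bfSigma_i\bfD}\bfSigma_i .
\end{align}
Taking $\bfD=\bfI$ recovers $\mean[\bfW_i^2]=\bfSigma_i^2+\trace{\bfSigma_i}\bfSigma_i$, which is consistent with Lemma \ref{lem:exact_moment_Wk}.

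For $n=1$, linearity gives $\kmul_1[\bfW]=\mean[\bfW]=\sum_r\bfSigma_r$, so $\bfTheta_1=\mathbf{0}$. For $n=2$, independence gives
\begin{align}
\mean[\bfW^2]=\sum_r\mean[\bfW_r^2]+\sum_{i\neq j}\bfSigma_i\bfSigma_j .
\end{align}
Since all the $\bfSigma_r$ are diagonal and commute, subtracting $\kmul_1[\bfW]^2$ cancels the cross terms exactly, leaving $\sum_r(\mean[\bfW_r^2]-\bfSigma_r^2)=\sum_r\kmul_2[\bfW_r]$. Hence $\bfTheta_2=\mathbf{0}$.

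For $n=3$, I would split $\mean[\bfW^3]$ by index pattern:
\begin{itemize}
\item all three indices equal, giving $\sum_r\mean[\bfW_r^3]$;
\item all three distinct, giving commuting products $\bfSigma_i\bfSigma_j\bfSigma_k$;
\item exactly two equal, giving the patterns $\mean[\bfW_i^2]\bfSigma_j$, $\bfSigma_j\mean[\bfW_i^2]$ and the sandwiched term $\mean[\bfW_i\bfSigma_j\bfW_i]$.
\end{itemize}
In a fictitious ``commutative'' model the sandwiched term would equal $\mean[\bfW_i^2]\bfSigma_j=\bfSigma_i^2\bfSigma_j+\trace{\bfSigma_i}\bfSigma_i\bfSigma_j$. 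For that model, the same algebra as in the scalar case shows that $\mean[\bfW^3]-\kmul_1\mean[\bfW^2]-2\kmul_2\mean[\bfW]$ equals $\kmul_3^{+}[\bfW]$. Running the identical recursion on the true expansion therefore gives $\kmul_3[\bfW]=\kmul_3^{+}[\bfW]+\bfTheta_3$, with
\begin{align}
\bfTheta_3=\sum_{i\neq j}\bigl(\mean[\bfW_i\bfSigma_j\bfW_i]-\mean[\bfW_i^2]\bfSigma_j\bigr)=\sum_{i\neq j}\bigl(\trace{\bfSigma_i\bfSigma_j}\bfSigma_i-\trace{\bfSigma_i}\bfSigma_i\bfSigma_j\bigr).
\end{align}
Splitting the ordered sum over $i\neq j$ into $i<j$ and $i>j$ reproduces the two lines of Eq. \eqref{eq:bfTheta_3}. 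This also confirms that $\bfTheta_3$ is diagonal, as required by Lemma \ref{lem:moment_cumul_Wtol}.

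The main obstacle I expect is the bookkeeping in the $n=3$ recursion. Specifically, I must check that every term except the sandwiched one cancels between $\mean[\bfW^3]$, $\kmul_1\mean[\bfW^2]$ and $2\kmul_2\mean[\bfW]$, and I must count the multiplicities of the ordered patterns $(i,i,j)$, $(i,j,i)$, $(j,i,i)$ correctly. To keep this clean, I would first verify the cancellation for $R=2$ entrywise on the diagonal, and then note that the $R$-term case is a sum over pairs because third-order cross terms involve at most two distinct indices. For general $n$, I would present the Zassenhaus viewpoint only as motivation: $\log\mean\{e^{\mathbbm{i}t\sum_r\bfW_r}\}$ differs from $\sum_r\log\mean\{e^{\mathbbm{i}t\bfW_r}\}$ by commutator contributions, which first appear at order $t^3$.
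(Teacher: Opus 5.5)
Your proof is correct, and it takes a different route from the paper's.

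The paper never works with the recursion of Lemma~\ref{lem:moment_cumul_Wtol} at third order. It factors $e^{\mathbbm{i}t\bfW}$ via the multivariable Zassenhaus formula as $e^{\mathbbm{i}t\bfW_1}\cdots e^{\mathbbm{i}t\bfW_R}e^{(\mathbbm{i}t)^2\bfC_2}e^{(\mathbbm{i}t)^3\bfC_3}\cdots$ and uses $\mean[\bfC_2]=\mathbf{0}$. After taking the matrix logarithm it reads off $\bfTheta_3=3!\,\bigl(\mean[\bfC_3]+\tsum_r\mean[\bfW_r\bfC_2]\bigr)$, which it evaluates with $\mean[\bfW_i\bfW_j\bfW_i]=\trace{\bfSigma_i\bfSigma_j}\bfSigma_i+\bfSigma_i^2\bfSigma_j$.

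You instead compare the exact third moment with the third Taylor coefficient of $\prod_r\mean[e^{\mathbbm{i}t\bfW_r}]$. This isolates the residual directly as the sandwich defects $\mean[\bfW_i\bfSigma_j\bfW_i]-\mean[\bfW_i^2]\bfSigma_j$. Expanding the paper's nested commutators leads to exactly these defects; for $R=2$, $3!\,\bigl(\mean[\bfC_3]+\tsum_r\mean[\bfW_r\bfC_2]\bigr)$ is precisely the sum of the two defects. So both arguments rest on the same expectation.

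Your version is more elementary and self-contained:
\begin{itemize}
\item it needs only independence, the Isserlis identity and the scalar moment--cumulant algebra, which is legitimate because $\mean[\bfW]$ and $\mean[\bfW^2]$ are diagonal and commute;
\item it makes the multiplicities of the patterns $(i,i,j)$, $(i,j,i)$, $(j,i,i)$ explicit;
\item your Isserlis step holds for every diagonal $\bfD$. The paper instead conjugates by $\bfSigma_j^{-1/2}$ and so tacitly assumes $\bfSigma_j$ invertible. That fails whenever a row of $\bfP$ has zero entries, although a continuity argument repairs it.
\end{itemize}

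What the paper's route buys in exchange is structural insight. It shows the residual is generated only by expectations of Lie polynomials, so it vanishes when the $\bfW_r$ commute and cannot appear before order three. It also indicates how $\bfTheta_n$ is organized for $n\ge4$.

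One minor wording point: all-distinct index triples do appear in $\mean[\bfW^3]$. It is the residual, not the set of cross terms, that involves at most two distinct indices, because those triples factor exactly.
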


\begin{IEEEproof}
The key step here is to adopt the multivariable Zassenhaus formula \cite[Eq. (2.1)]{wang2019multivariable} to expand $e^{\mathbbm{i} t \bfW}$. Details of the proof are presented in Appendix \ref{apx:mat_cumulant_W}.
\end{IEEEproof}

Following from \eqref{eq:moment_cumul_Wtol}, the eigenvalue moments and cumulants of $\bfW$ can be determined as follows:
\begin{align}
\smean\left[ \uplambda_{\bfW}^n \right]
    &=   \frac{ 1 }{ T} \trace{\mean\left[ \bfW^n \right]},
\label{eq:scalar_moment_Wtol} \\
{\skmul}_{n}[\uplambda_{\bfW}] 
    &= \smean\left[ \uplambda_\bfW^n \right] - \sum_{i=1}^{n-1}{ \binom{n-1}{i-1} \skmul_i\left[ \uplambda_\bfW \right] \smean\left[ \uplambda_\bfW^{n-i} \right] } 
    \label{eq:scalar_cumulant_Wtol}.
\end{align}

While $\bfTheta_1 = \bfTheta_2 = \mathbf{0}_{T\times T}$, the complexity of the residual terms $\bfTheta_{n}$ in \eqref{eq:mat_cumulant_W} grows factorially with $n$ for $n \ge 3$ due to the higher-order Lie polynomials involved in the Zassenhaus expansion. 
    Therefore, to maintain analytical tractability, the subsequent derivations focus on the contributions of the additive term $\kmul^{+}_{n}[\bfW]$ in \eqref{eq:mat_cumulant_W}, which forms the basis of the proposed moment matching framework.
    
Based on the framework, the first three moments and eigenvalue cumulants of $\bfW$ are listed as shown below:
\setlength{\jot}{2pt} 
\allowdisplaybreaks
\begin{align}
\mean[\bfW] 
    &=  \textstyle \sum\limits_{r=1}^{R} { \bfSigma_r }, \\
\mean[\bfW^2] 
    &=  \textstyle \left[ \sum\limits_{r=1}^{R} { \trace{\bfSigma_r} \bfSigma_r } \right] + ( \mean[\bfW] )^2, \\
\mean[\bfW^3] 
    &\propto
    \textstyle \left[
        \sum\limits_{r=1}^{R}{ \left[ \trace{\bfSigma_r}^2 + \trace{\bfSigma_r^2} \right] \bfSigma_r - \trace{\bfSigma_r} \bfSigma_r^2 + \bfSigma_r^3 } 
    \right]
    \notag\\
    &\quad\textstyle 
     +  3 \mean[\bfW] \left[ \sum\limits_{r=1}^{R}{ \trace{\bfSigma_r} \bfSigma_r } \right]
     +  (\mean[\bfW])^3 , \\
    % =======================================================
\skmul_1[\uplambda_{\bfW}]
    &=  \frac{1}{T} \trace{\mean[\bfW]}, \label{eq:1st_cumul_W} \\
\skmul_2[\uplambda_{\bfW}]
    &=   \textstyle
    \frac{1}{T} \! \left[\trace{\mean[\bfW]^2} \!+ \sum\limits_{r=1}^R \trace{\bfSigma_r}^2 \right] - (\skmul_1[\uplambda_{\bfW}])^2, \! \label{eq:2nd_cumul_W} \\
\skmul_3[\uplambda_{\bfW}]
    &\propto  \textstyle
    \frac{1}{T} \left[ 
        \trace{\mean[\bfW]^3} 
            + 3 \text{Tr}\left( \mean[\bfW] \sum\limits_{r=1}^R \trace{\bfSigma_r} \bfSigma_r \right)
    \right.
    \notag\\
    &\qquad\textstyle
    \left.
            + \sum\limits_{r=1}^R \trace{\bfSigma_r}^3 + \trace{\bfSigma_r^3}
    \right] + \frac{2}{T^3} \trace{\mean[\bfW]}^3
    \notag\\
    &\quad\textstyle
    -\! \frac{3}{T^2} \! \left[ \trace{\mean[\bfW]^2} +\! \sum\limits_{r=1}^R \trace{\bfSigma_r}^2 \right] \! \trace{\mean[\bfW]}, \! \label{eq:3rd_cumul_W}
\end{align}
where the fourth order moments and cumulants are given by Eq. \eqref{eq:matrix_4th_moment} and Eq. \eqref{eq:scalar_4th_cumulant}, respectively, at top of the next page.
Here, the sign $\propto$ means `proportional to' and is used to indicate that the expressions account solely for the contributions from the additive term $\kmul_{n}^{+}[\bfW]$. 
\begin{figure*}
\begin{align}
\mean[\bfW^4] 
    &\propto  \textstyle
    \mean[\bfW]^4 + \left[
        \sum_{r=1}^{R} \left[ \trace{\bfSigma_r}^3 + 3 \trace{\bfSigma_r} \trace{\bfSigma_r^2} + 2 \trace{\bfSigma_r^3} \right] \bfSigma_r
        \!-   \left[ 4 \trace{\bfSigma_r}^2 + \trace{\bfSigma_r^2} \right] \bfSigma_r^2
            \!+  4 \trace{\bfSigma_r} \bfSigma_r^3 + \bfSigma_r^4
    \right]
    \notag\\
    &~\textstyle
        + 4 \mean[\bfW] \! \left[ \sum_{r=1}^{R} \! \left[ \trace{\bfSigma_r}^2 \!+ \trace{\bfSigma_r^2} \right] \! \bfSigma_r \!- \trace{\bfSigma_r} \bfSigma_r^2 \!+ \bfSigma_r^3 \right] \!
        \! + 3 \! \left[ \sum_{r=1}^{R} \! \trace{\bfSigma_r} \bfSigma_r \right]^2 \!
        \!\! + 6 (\mean[\bfW])^2 \! \left[ \sum_{r=1}^{R} \! \trace{\bfSigma_r} \bfSigma_r \right] \!, \!\!\!
    \label{eq:matrix_4th_moment} \\
% ======================================
\skmul_4[\uplambda_{\bfW}]
    &\propto \textstyle
    \frac{1}{T} 
        \text{Tr}\left\{
            4 \mean[\bfW] \! \left[ \sum_{r=1}^R \! \left[ \trace{\bfSigma_r}^2 \!+ \trace{\bfSigma_r^2} \right]\!  \bfSigma_r \!- \trace{\bfSigma_r} \bfSigma_r^2 \!+ \bfSigma_r^3 \right]\! 
            \!+ 6 \mean[\bfW]^2 \! \left[ \sum_{r=1}^R \! \trace{\bfSigma_r} \bfSigma_r \right] \! 
            \!+ 3 \! \left[ \sum_{r=1}^R \! \trace{\bfSigma_r} \bfSigma_r \right]^2 \! 
        \right\}
    \notag \\
    &~\textstyle~ 
    +   \frac{1}{T} 
    \left[
        \trace{\mean[\bfW]^4} 
        + \sum_{r=1}^R \trace{\bfSigma_r}^4 - \trace{\bfSigma_r}^2 \trace{\bfSigma_r^2} - \trace{\bfSigma_r^2}^2 + 6 \trace{\bfSigma_r} \trace{\bfSigma_r^3} + \trace{\bfSigma_r^4} 
    \right]
    \notag \\
    &~\textstyle 
        - \frac{4}{T^2} 
        \left[
            \trace{\mean[\bfW]^3} + 3 \text{Tr}\left[ \mean[\bfW] \sum_{r=1}^R \trace{\bfSigma_r} \bfSigma_r \right]
            + \sum_{r=1}^R \trace{\bfSigma_r}^3 + \trace{\bfSigma_r^3}
        \right] \trace{\mean[\bfW]}
    \notag \\
    &~\textstyle 
        - \frac{3}{T^2} \left[ \sum_{r=1}^R \trace{\bfSigma_r} + \trace{\mean[\bfW]^2} \right]^2
        + \frac{12}{T^3} \left[ \sum_{r=1}^R \trace{\bfSigma_r}^2 + \trace{\mean[\bfW]^2} \right] \trace{\mean[\bfW]}^2 
        - \frac{6}{T^4} \trace{\mean[\bfW]}^4.
    \label{eq:scalar_4th_cumulant}
\end{align}
\hrulefill
\end{figure*}

\subsection{Approximation to Semi-Correlated Wishart}
\label{sec:approximate_complex_wishart}

Building on the cumulant expressions that will be derived in Corollary \ref{lem:cumul_sum_of_iso_wishart}, we define target distribution $\widetilde{\bfW}$ as a sum of statistically independent, isotropic complex Wishart matrices:
\begin{align}
\widetilde{\bfW}
    \triangleq  \sum_{n=1}^{N} \widetilde{\bfW}_{n}, 
\label{eq:semi_corr_wishart_model}
\end{align}
where $\widetilde{\bfW}_{n} \sim {\cal CW}_{q}\left( \varP_{n}, {\scaleSigma}_{n} \bfI_{q} \right)$ follows the complex Wishart distribution of dimension $q\times q$, with $\varP_n$ degrees of freedom and covariance matrix ${\scaleSigma}_{n} \bfI_{q}$.

To assist in matching to complex semi-Wishart matrices, we introduce the following Lemma.

\begin{Corollary}
\label{lem:cumul_sum_of_iso_wishart}
Let $\bfX_{1}, \dots, \bfX_{N}$ be statistically independent $q\times q$ random matrices, where $\bfX_{n} \sim \mathcal{CW}_{q}\left( p_n, \sigma_n \bfI_{q} \right)$, and define their sum as $\bfX \triangleq \sum_{n=1}^{N}{\bfX_n}$ with $\rank{\bfX} = q$, the $\ell^{\uth}$ order scalar cumulants of the unordered eigenvalues of $\bfX$ are obtained as follows:
\begin{subequations}
\begin{align}
\skmul_{1}[\uplambda_{\bfX}]
    &=  \textstyle 
    \sum\limits_{n=1}^{N}{p_n \sigma_n}, \\
\skmul_{2}[\uplambda_{\bfX}]
    &=  \textstyle q 
    \sum\limits_{n=1}^{N}{p_n \sigma_n^2}, \\
\skmul_{3}[\uplambda_{\bfX}]
    &\propto  \textstyle 
    (q^2 + 1) \sum\limits_{n=1}^{N}{p_n \sigma_n^3}, \\
\skmul_{4}[\uplambda_{\bfX}]
    &\propto \textstyle 
    (q^3 - q^2 + 5q + 1) 
    \sum\limits_{n=1}^{N}{\nu_n \sigma_n^4}, \\
    &\cdots \notag\\
\skmul_{\ell}[\uplambda_{\bfX}]
    &\propto  \textstyle
    {\varRho}_{\ell}[\uplambda_{\bfX}] 
    \sum\limits_{n=1}^{N}{p_n \sigma_n^{\ell}}, \quad \forall \ell \ge 3. \label{eq:cumul_sum_of_iso_wishart_1} 
\end{align}

Here, coefficient $\varRho_{\ell}[\uplambda_{\bfX}]$ is defined as shown below:
\begin{align}
\varRho_{\ell}[\uplambda_{\bfX}]
    &= \sum_{r=1}^{\ell} \frac{1}{ r q^r } \sum_{i=1}^{r} \binom{r}{i} (-1)^{i-1} (q i)_{\ell},
\end{align}
where $(x)_j \triangleq \frac{(x+j-1)!}{(x-1)!}$ is the Pochhammer symbol.
\end{subequations}
\end{Corollary}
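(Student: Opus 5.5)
The plan is to specialize the machinery of Lemmas \ref{lem:exact_moment_Wk}--\ref{lem:moment_cumul_Wtol} and Theorem \ref{theo:mat_cumulant_W} to isotropic summands. Because every $\bfX_n$ has covariance $\sigma_n\bfI_q$, all matrix-valued moments and cumulants are scalar multiples of $\bfI_q$. The Wishart matrices are also unitarily invariant, so all the matrices involved commute in expectation. I therefore expect the residual $\bfTheta_n$ to vanish in the regime retained by ``$\propto$''. For instance, substituting $\bfSigma_i=\sigma_i\bfI_q$ into \eqref{eq:bfTheta_3} gives
\begin{align}
\trace{\bfSigma_i\bfSigma_j}\bfSigma_i-\trace{\bfSigma_i}\bfSigma_i\bfSigma_j = q\sigma_i^2\sigma_j\bfI_q - q\sigma_i^2\sigma_j\bfI_q=\mathbf{0},
\end{align}
and likewise for the second sum. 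This makes cumulants additive: $\kmul_\ell[\bfX]=\sum_n \kmul_\ell[\bfX_n]$. The same additivity then passes to the scalar eigenvalue cumulants, but only in the additive ($\propto$) sense the paper adopts for $\ell\ge 3$. This is why the first two cumulants are stated with equality and the higher ones with $\propto$.

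First, each $\bfX_n$ with $p_n$ degrees of freedom is itself a sum of $p_n$ independent rank-one Wisharts with covariance $\sigma_n\bfI_q$. Hence $\bfX$ is a sum of $\sum_n p_n$ rank-one isotropic terms, of exactly the type treated in Lemma \ref{lem:exact_moment_Wk} with $\bfSigma_r=\sigma_n\bfI_q$. By homogeneity, each such term contributes cumulants scaling as $\sigma_n^\ell$ times a universal function of $q$. This already yields the structure $\skmul_\ell\propto \varRho_\ell\sum_n p_n\sigma_n^\ell$, and it remains to identify $\varRho_\ell$.

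To compute $\varRho_\ell$, I would take a single rank-one isotropic term $\bfW=\mathbf{h}\mathbf{h}^{\tph}$ with $\mathbf{h}\sim\mathcal{CN}(\mathbf{0},\bfI_q)$. Then $\trace{\bfW}=\|\mathbf{h}\|^2\sim\text{Gamma}(q,1)$, so $c^{(m)}=\smean[\trace{\bfW}^m]=(q)_m$. Inserting this into \eqref{eq:exact_moment_Wk} gives $\mean[\bfW^\ell]=(q+1)_{\ell-1}\bfI_q$ directly, since $\bfW^\ell=\|\mathbf{h}\|^{2(\ell-1)}\bfW$. The eigenvalue moments are then $\smean[\uplambda^\ell]=\frac1q(q)_\ell$. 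Converting moments to cumulants with \eqref{eq:scalar_cumulant_Wtol}, or equivalently with the log-series formula $\skmul_\ell=\ell!\,[t^\ell]\log\bigl(1+\sum_{m\ge1}\smean[\uplambda^m]t^m/m!\bigr)$, I would expand the logarithm as $\sum_r(-1)^{r-1}M(t)^r/r$ with $M(t)=\frac1q\bigl((1-t)^{-q}-1\bigr)$. Applying the binomial theorem to $\bigl((1-t)^{-q}-1\bigr)^r$ gives terms $(-1)^{r-i}\binom ri (1-t)^{-qi}$. Their $t^\ell$ coefficients are $(qi)_\ell/\ell!$, which reproduces exactly the double sum defining $\varRho_\ell$ once the signs are collected. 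As checks, $\ell=1,2$ give $1$ and $q$, and the explicit $\ell=3,4$ polynomials follow by expansion.

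The main obstacle is justifying the passage from scalar cumulants of one term to those of $\uplambda_{\bfX}$ for the sum. Eigenvalue moments $\frac1q\trace{\mean[\bfX^\ell]}$ are not additive under independent sums of non-commuting matrices, even when the matrices are isotropic. I would handle this exactly as the paper does: use Theorem \ref{theo:mat_cumulant_W} with the residuals $\bfTheta_\ell$ dropped, so the additive part $\kmul^+_\ell$ alone defines the ``$\propto$'' relation. The claimed formula for $\ell\ge3$ is then understood within that approximation. Exact equality holds for $\ell=1,2$ because $\bfTheta_1=\bfTheta_2=\mathbf{0}$, so in those cases the only remaining work is a direct verification through \eqref{eq:1st_cumul_W}--\eqref{eq:2nd_cumul_W}.
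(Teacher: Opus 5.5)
Your proposal is correct and follows essentially the same route as the paper's proof: split each $\bfX_n$ into $p_n$ rank-one isotropic terms, use the per-term generating function $1+\frac{(1-\sigma_n t)^{-q}-1}{q}$, discard the Lie-polynomial residual so that cumulants add, and extract coefficients. Your direct expansion of $\log(1+M(t))$ via the binomial theorem is just the unpacked form of the paper's Fa\`a di Bruno step combined with the cited Bell-polynomial identity. One small refinement: your check that $\bfTheta_3$ vanishes for isotropic covariances makes the $\ell=3$ formula exact, not merely $\propto$, so it is not true that only $\ell=1,2$ hold with equality; non-additivity first appears at $\ell=4$ (e.g., a single $\mathcal{CW}_q(p,\bfI_q)$ has $\skmul_4[\uplambda] = p(q^3-q^2+5q+1)-p(p-1)(q^2-1)$), and that is where the $\propto$ convention genuinely matters.
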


\begin{IEEEproof}
See Appendix \ref{apx:cumul_sum_of_iso_wishart}. 
\end{IEEEproof}

    As a special case of Theorem \ref{theo:mat_cumulant_W}, the contribution from the additive matrix-valued cumulants of $\bfX_n$ to the $\ell^{\uth}$ order eigenvalue cumulant of $\bfX$ nicely reduces to a tractable weighted sum of monomials in $\sigma_n$.
Such a property motivates our shift from raw moment matching to cumulant matching, whose accuracy will be highlighted in Section~\ref{sec:results}.

Matching the contribution of the additive matrix-valued cumulant terms
yields the following system of equations:
\begin{align}
\sum_{n=1}^{N} {\varP}_{n} \scaleSigma^{j}_{n}
    =   \frac{ \skmul_{j}^{+}[\uplambda_{\bfW}] }{ {\varRho}_{j}[\uplambda_{\widetilde{\bfW}}] }, 
    \quad \forall j \in \{1, 2, \dots, 2N-1\},
\label{eq:original_matching_problem}
\end{align} 
where $\skmul_{j}^{+}[\uplambda_{\bfW}]$ denotes contribution of the additive term $\kmul_n^+[\bfW]$ to the $j^{\uth}$ order eigenvalue cumulant of $\bfW$.
The system in \eqref{eq:original_matching_problem} is further subjected to the following constraints:
\begin{subequations}
\label{eq:cond_WishartSyst}
\begin{empheq}[left={}]{align}
    \scaleSigma_{n} &> 0, \quad \forall n \in \{ 1, 2, \dots, N \}, \label{eq:cond_WishartSyst_b} \\
    \varP_{n} &> 0, \quad \forall n \in \{ 1, 2, \dots, N \}, \label{eq:cond_WishartSyst_a} \\
    \scaleSigma_{n'} &\ne \scaleSigma_{n}, \quad \forall n' \ne n \in \{ 1, 2, \dots, N \}, \label{eq:cond_WishartSyst_c} \\
    \tsum_{n=1}^{N}{ \varP_{n} } &= R, \label{eq:cond_WishartSyst_d}
\end{empheq}
\end{subequations}
where Eq. \eqref{eq:cond_WishartSyst_b} and Eq. \eqref{eq:cond_WishartSyst_a} guarantee the positivity of the scale matrix and degrees of freedom, Eq. \eqref{eq:cond_WishartSyst_c} enforces the uniqueness of the solutions, 
    and, Eq. \eqref{eq:cond_WishartSyst_d} is imposed to maintain the dimension of the receive covariance.

The matching problem in Eq. \eqref{eq:original_matching_problem} with the constraint Eq. \eqref{eq:cond_WishartSyst} is known as the Stieltjes moment problem \cite{golub2009matrices}.
    In this manner, we first treat $\scaleSigma_{n}$ as the roots and ${\varP}_{n}$ as the weights and we rewrite Eq. \eqref{eq:original_matching_problem} in the following canonical form:
\begin{align}
\sum_{n=1}^{N} {\varP}_{n} \scaleSigma^{\ell}_{n}
    =   \varMu_{\ell}, \quad \forall \ell \in \{0, 1, \dots, 2N-1\},
\label{eq:transformed_matching_problem}
\end{align} 
where the target moments are defined as follows:
\begin{subequations}
\begin{empheq}[left={\empheqlbrace}]{alignat=2}
    \varMu_{0} &= R, &\quad& \text{if}~\ell = 0, \\ 
    \varMu_{\ell} &= \frac{ \skmul_{\ell}^{+}[\uplambda_{\bfW}] }{ {\varRho}_{\ell}[\uplambda_{\widetilde{\bfW}}] }, &\quad& \text{if}~ \ell = 1, 2, \dots, 2 N-1.
\end{empheq}
\end{subequations}

Solvability of the system relates to the following $N\times N$ Hankel matrices:
\begin{align}
\left[ \bfDelta_N \right]_{i,j}
    &\triangleq  \varMu_{i+j-2} 
    =   \sum_{n=1}^{N} {\scaleSigma_n^{i-1} \varP_n \scaleSigma_n^{j-1} }, \label{eq:48} \\
\left[ \mathbf{\Omega}_N \right]_{i,j}
    &\triangleq  \varMu_{i+j-1} 
    =   \sum_{k=1}^{N} {\scaleSigma_k^{i-1} \varP_k \scaleSigma_k \scaleSigma_k^{j-1} }, \label{eq:49}
\end{align}
where 
    ${\bfDelta_0 = 1}$.

Let us denote 
    ${\boldsymbol{\sigma}} \triangleq \left[ \scaleSigma_1, \scaleSigma_2, \dots, \scaleSigma_N \right]^{\top}$,
    ${\boldsymbol{p}} \triangleq \left[ \varP_1, \varP_2, \dots, \varP_N \right]^{\top}$ and  
    ${\boldsymbol{\mu}} = \left[ \varMu_{0}, \varMu_{1}, \dots, \varMu_{2N-1} \right]^{\top}$. The method for solving Eq. \eqref{eq:transformed_matching_problem} is based on the construction of the following orthogonal polynomial $\pi_N(\scaleSigma)$ \cite{golub2009matrices} as:
\begin{align}
\pi_{N}(\scaleSigma)
     =  \frac{ \det\begin{bmatrix}
        \multicolumn{4}{c}{\scalebox{1.0}{$\bfDelta_{N}$}} \\
        1 & \scaleSigma & \cdots & \scaleSigma^N
    \end{bmatrix} }{ \det\left[ \bfDelta_{N} \right] } 
    &=  \scaleSigma^N + \sum_{i=0}^{N-1}{ c_{i} \scaleSigma^{i} } \\
    &=  \prod_{n=1}^{N}\left( \scaleSigma - \scaleSigma_{n} \right).
    \label{eq:PN_ortho}
\end{align}
for $i = 1, 2, \dots, N$ and $j = 1, 2, \dots, N$. 
    The roots of $\pi_{N}(\scaleSigma)$ are eigenvalues of the symmetric tridiagonal Jacobi matrix \cite[Theorem 2.13]{golub2009matrices}:
\begin{align}
J_N = 
\begin{bmatrix}
    \alpha_{1} & \sqrt{\gamma_{1}} & 0 & \cdots & 0 \\
    \sqrt{\gamma_{1}} & \alpha_{2} & \sqrt{\gamma_{2}} & \cdots & 0 \\
    0 & \sqrt{\gamma_{2}} & \alpha_{3} & \ddots & \vdots \\
    \vdots & \vdots & \ddots & \ddots & \sqrt{\gamma_{N-1}} \\
    0 & 0 & \cdots & \sqrt{\gamma_{N-1}} & \alpha_{N}
\end{bmatrix},
\end{align}
where, due to \cite[Section 5.2]{golub2009matrices},
\begin{subequations}
\begin{align}
\alpha_{n} 
    &\triangleq \frac{ \det[\bfDelta'_{n}] }{ \det[\bfDelta_{n}] } 
    - \frac{ \det[\bfDelta'_{n-1}] }{ \det[\bfDelta_{n-1}] }, \\
\gamma_{n-1}  
    &\triangleq 
    \frac{ \det[\bfDelta_{n}] }{ \det[\bfDelta_{n-1}] } 
    \frac{ \det[\bfDelta_{n-2}] }{ \det[\bfDelta_{n-1}] },
\nonumber
\end{align}
\end{subequations}
for $n = 1, 2, \dots, N$, with $\gamma_0 = 0$.
Matrix $\bfDelta'_n \in \mathbbm{R}^{n\times n}$ is defined as follows:
\begin{align}
\left[ \bfDelta'_n \right]_{i,j}
    &\triangleq \begin{cases}
        \varMu_{i+j-2}, & \text{if } j \le n-1, \\
        \varMu_{i+j-1}, & \text{if } j = n,
    \end{cases}
\end{align}
with $\bfDelta'_{0} = 0$ and $\bfDelta'_{1} = \varMu_1$.

    Several important properties are outlined below.
\begin{itemize}
    \item If $\bfDelta_N$ is positive definite, the Sylvester’s criterion guarantees that $\bfDelta_0, \bfDelta_1, \dots, \bfDelta_{N-1}$ are also positive definite \cite[Theorem 7.2.5]{horn2012matrix}. 
    Consequently, subdiagonal elements $\sqrt{\gamma_1}, \sqrt{\gamma_2}, \dots, \sqrt{\gamma_{N-1}}$ are positive, which ensures that eigenvalues of $J_N$ are pairwise distinct.
    This means that entries of $\boldsymbol{\sigma}$ are pairwise distinct, implying that Eq. \eqref{eq:cond_WishartSyst_c} is satisfied.
    \item The $N\times N$ Vandermonde matrix $\bfV$, defined as 
    $\left[ \bfV \right]_{i, j} = \scaleSigma_j^{i-1}$,
    is invertible if and only if entries of $\boldsymbol{\sigma}$ are pairwise distinct. 
    Said differently, a sufficient condition for $\bfV$ to be  invertible is that $\bfDelta_N$ is positive definite.
    \item The Hankel matrices in Eq. \eqref{eq:48} and Eq. \eqref{eq:49} can be rewritten as follows:
    \begin{align}
        \bfDelta_N &= \bfV \diag{\ibf{p}} \bfV^{\top}, \\
        \mathbf{\Omega}_{N} &= \bfV \diag{\ibf{p}} \diag{\ibf{\sigma}} \bfV^{\top},
    \end{align}
    respectively. According to the Sylvester's law of inertia \cite{horn2012matrix}, 
    if $\bfV$ is invertible, the number of positive, negative, and zero eigenvalues of $\bfDelta_N$ and $\mathbf{\Omega}_{N}$ equals that of $\diag{\boldsymbol{p}}$ and $\diag{\boldsymbol{p}} \diag{\boldsymbol{\sigma}}$, respectively. 
\end{itemize}

\textbf{\textit{Solvability condition}}: 
    Based on the aforementioned properties, {a sufficient condition for the system to be solvable is that $\bfDelta_N$ and $\mathbf{\Omega}_{N}$ are positive definite.} 
    
    Once this condition is satisfied, built-in routines available in popular computing software (e.g., MATLAB's built-in function \texttt{eig} or Mathematica's \texttt{Eigenvalues} function) can be used to obtain eigenvalues of $J_N$, thus also yielding $\{ \scaleSigma_n \}$.
Accordingly, entries of $\boldsymbol{p}$ are derived as follows:
\begin{align}
    \boldsymbol{p} = \bfV^{-1} 
    \begin{bmatrix}
        \varMu_{0} & \varMu_{1} & \cdots & \varMu_{N-1}
    \end{bmatrix}^{\top}.
\end{align}

\textbf{\textit{Rounding and scale correction}}:
As established by the moment matching system, the calculated weights $\boldsymbol{p}$ and roots $\boldsymbol{\sigma}$ represent the degrees of freedom (d.o.f.) and scale parameters of the target Wishart matrices $\{ \widetilde{\bfW}_{n} \}$, respectively. 
    Since standard Wishart distributions are frequently defined with integer-valued d.o.f., 
    following the approach previously used in \cite{Pivaro2017TVT} for isotropic Wishart matrices, 
    we round the calculated d.o.f. to the nearest integer.
Moreover, as $\smean[\uplambda_{\widetilde{\bfW}_n}] = \scaleSigma_{n} \varP_{n}$, updating the d.o.f. requires a corresponding adjustment to the scale parameters.

With the adjusted parameters, we can now construct the target covariance matrix $\bfSigma_{\bfgamma}$ introduced in Section \ref{sec:rank_1_Approx_via_mmm} as follows:
\begin{align}
\bfSigma_{\bfgamma}
    =  \textnormal{blkdiag}\bigg(\!
        \left[ \frac{ \varP_{i} \scaleSigma_{i} }{ \nint{\varP_{i}} }
            \bfI_{ \nint{\varP_{i}} } \right]_{i=1, \dots, N} 
    \bigg), 
\end{align}
where $\nint{x}$ denotes the nearest integer to $x$.

\section{Ergodic Capacity of the Approximated Separable Channels}
Considering the rank-$1$ factorization of the power coupling matrix ${\bfP} \approx \mathbf{u} \mathbf{v}^{\top}$, the approximate  Ergodic Capacity (EC) of the Weichselberger channel is derived as follows:
\begin{align}
\widetilde{C}_{\textnormal{erg}}\left( {\mathbf{u}}, {\mathbf{v}} \right)
    =   \smean\left[ 
        \log_2\det\left( 
            \bfI_{T} + \frac{\avgsnr}{T} \bfD_{\mathbf{v}} {\bfH}_{\iid}^{\tph} \bfD_{\mathbf{u}} {\bfH}_{\iid}
        \right)
    \right],
\label{eq:ergodic_cap}
\end{align}
where 
    $\avgsnr$ denotes the total average receive SNR, which accounts for total transmission power, path loss, and noise power. 
The factors $\mathbf{u}$ and $\mathbf{v}$ for each presented rank-$1$ factorization are summarized in Table~\ref{tab:factors}.

\begin{table}[t]
\centering
\caption{Rank-$1$ factors $\mathbf{u},\mathbf{v}$ for each rank-$1$ factorization $\bfP \approx \mathbf{u}\mathbf{v}^{\top}$.}
\label{tab:factors}
\begin{tabular}{l c c}
\toprule
Factorization & $\mathbf{u}$ & $\mathbf{v}$ \\
\midrule
Kronecker & $\frac{1}{T}\, \boldsymbol{\sigma}_{\bfR}$ & $\frac{1}{R}\,  \boldsymbol{\sigma}_{\bfT}$ \\
KLD       & $\bfalpha^{\star}$ & $\bfbeta^{\star}$ \\
MMM       & $\diag{\boldsymbol{s}_{\bfR}} \bfgamma$ & $\boldsymbol{s}_{\bfT}$ \\
\bottomrule
\end{tabular}
\end{table}

\begin{Theorem}
\label{lem:exact_Ergodic}
Without loss of generality, let entries of $\mathbf{u}$ and $\mathbf{v}$ be ordered as 
${u}_{1} > {u}_{2} > \cdots > {u}_{R}$ and ${v}_{1} > {v}_{2} > \cdots > {v}_{T}$, respectively. 
    Then, Eq.\eqref{eq:ergodic_cap} can be obtained as follows:
\begin{align}
\widetilde{C}_{\textnormal{erg}}\left( {\mathbf{u}}, {\mathbf{v}} \right)
    =  - \frac{R-1}{\ln 2} + \trace*{ \widetilde{\bfPsi}^{-1} \partial\widetilde{\bfPsi} },
\label{eq:exact_Ergodic}
\end{align}
where $\widetilde{\bfPsi} \triangleq \begin{bmatrix} \widetilde{\bfPsi}_{1}^{\top} & {\bfPsi}_{2}^{\top} \end{bmatrix}^{\top}$, 
     $\partial\widetilde{\bfPsi} \triangleq \begin{bmatrix} \partial\widetilde{\bfPsi}_{1}^{\top} & \partial{\bfPsi}_{2}^{\top} \end{bmatrix}^{\top}$, with
\begin{align}
\left[ \widetilde{\bfPsi}_{1} \right]_{i, j}
    &\triangleq 
        \left( - \frac{\avgsnr {u}_j}{T} \right)^{R-1} {v}_i^{T-1} \xi_{R-1}
    \notag\\
    &\qquad
        +   \sum_{k=R-T}^{R-2} \left( - \frac{\avgsnr {u}_j}{T} \right)^{k} {v}_i^{k-R+T} \xi_{k}, \\
% -----------------------------------------------------------------------------------
\left[ \partial\widetilde{\bfPsi}_{1} \right]_{i, j}
&\triangleq 
    % -   \frac{1}{\ln 2} 
        - \frac{1}{\ln 2} \left( - \frac{\avgsnr {u}_j}{T} \right)^{R-1} {v}_i^{T-1} \xi_{R-1}
    \notag\\
&\qquad\qquad\times
        \left[ H_{R-1} - e^{ \frac{T}{\avgsnr {u}_j {v}_i} } E_1\left( \frac{T}{\avgsnr {u}_j {v}_i} \right) \right]
    \notag\\
&\qquad
        - \frac{1}{\ln 2} \sum_{k=R-T}^{R-2} \left( - \frac{\avgsnr {u}_j}{T} \right)^{k} {v}_i^{k-R+T} \xi_{k} H_{k},
\end{align}
for $i = 1, \dots, T$ and $j = 1, \dots, R$, where 
    $H_k \triangleq \sum_{r=1}^{k} \frac{1}{-R+r}$, 
    $\xi_{r} \triangleq \left( - R + 1 \right)_{r}$, and where
    $E_n(x)$ denotes the $n^{\uth}$ order exponential integral function and:
\begin{align}
% -----------------------------------------------------------------------------------
\left[ {\bfPsi}_{2} \right]_{i, j}
    &\triangleq 
    \left( - \frac{\avgsnr {u}_j}{T} \right)^{i-1} \xi_{i-1}, \\
% -----------------------------------------------------------------------------------
\left[ \partial{\bfPsi}_{2} \right]_{i, j}
    &\triangleq 
    - \frac{1}{\ln 2} \left( - \frac{\avgsnr {u}_j}{T} \right)^{i-1} \xi_{i-1} H_{i-1},
\end{align}
for $i = 1, \dots, R-T$ and $j = 1, \dots, R$.
\end{Theorem}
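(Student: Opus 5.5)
We evaluate the expectation in Eq.~\eqref{eq:ergodic_cap} through the joint eigenvalue density of the separable (double-correlated) Wishart matrix $\bfD_{\mathbf{v}}^{1/2}{\bfH}_{\iid}^{\tph}\bfD_{\mathbf{u}}{\bfH}_{\iid}\bfD_{\mathbf{v}}^{1/2}$, combined with a determinant-derivative trick. We use the classical approach of \cite{kiessling2004exact} (and its Harish-Chandra--Itzykson--Zuber based refinements). The starting point is
\begin{align}
\log_2\det(\bfI_T+\bfX)=\frac{1}{\ln 2}\left.\frac{\partial}{\partial \nu}\det(\bfI_T+\bfX)^{\nu}\right|_{\nu=0}.
\end{align}
This reduces $\widetilde{C}_{\textnormal{erg}}$ to $\frac{1}{\ln 2}\partial_\nu \smean[\det(\bfI_T+\frac{\avgsnr}{T}\bfD_{\mathbf{v}}{\bfH}_{\iid}^{\tph}\bfD_{\mathbf{u}}{\bfH}_{\iid})^{\nu}]|_{\nu=0}$. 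The ordering $u_1>\cdots>u_R$, $v_1>\cdots>v_T$ ensures that all Vandermonde determinants below are nonzero; equal entries would be handled afterwards by a limiting argument.

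The key steps, in order, are as follows.

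\textbf{Step 1.} Write the Gaussian expectation as an integral over ${\bfH}_{\iid}$. Integrating over the unitary group with the HCIZ formula applied to $\bfD_{\mathbf{u}}$ (an $R\times R$ matrix acting on a $T\times T$ Wishart of rank $T\le R$) gives the joint density of the eigenvalues $\lambda_1,\dots,\lambda_T$. This density is proportional to
\begin{align}
\frac{\det[e^{-\lambda_i/(u_j v_k)}]\,\det[\lambda_i^{k-1}]}{\det[\mathbf{V}(\mathbf{u})]\det[\mathbf{V}(\mathbf{v})]},
\end{align}
where the rectangular $T\times R$ structure is completed by $R-T$ extra polynomial rows. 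These extra rows are the origin of the block ${\bfPsi}_2$.

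\textbf{Step 2.} Apply the Andréief (Cauchy--Binet) identity to collapse the $T$-fold integral into a single $R\times R$ determinant $\det\widetilde{\bfPsi}(\nu)$. Its entries are one-dimensional integrals of the form $\int_0^\infty x^{k}(1+\tfrac{\avgsnr}{T}x)^{\nu}e^{-x/(u_j v_i)}\,dx$; after rescaling, the leading factor becomes $(-\avgsnr u_j/T)^k$ times Pochhammer-type coefficients. The normalizing constant is fixed by setting $\nu=0$, which yields $\det\widetilde{\bfPsi}(0)$ in the denominator.

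\textbf{Step 3.} Use Jacobi's formula $\partial_\nu\log\det\widetilde{\bfPsi}(\nu)|_{0}=\trace{\widetilde{\bfPsi}^{-1}\partial\widetilde{\bfPsi}}$ to obtain the trace structure of Eq.~\eqref{eq:exact_Ergodic}. Here $\widetilde{\bfPsi}=\widetilde{\bfPsi}(0)$ and $\partial\widetilde{\bfPsi}$ is its entrywise $\nu$-derivative at $0$.

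\textbf{Step 4.} Evaluate the entries explicitly. Expanding $(1+\tfrac{\avgsnr}{T}x)^{\nu}$ in a hypergeometric/Tricomi $U$ representation and differentiating at $\nu=0$ turns the polynomial-type terms into digamma differences; these are the harmonic-like sums $H_k=\sum_{r=1}^{k}1/(r-R)$. The single non-polynomial term, of order $R-1$, produces $e^{x}E_1(x)$ with argument $T/(\avgsnr u_j v_i)$, since $\int_0^\infty \ln(1+ax)e^{-x/b}dx$ leads to $e^{1/(ab)}E_1(1/(ab))$. The prefactor $\xi_r=(-R+1)_r$ arises from the Pochhammer coefficients produced by the Cauchy--Binet row reduction. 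The additive constant $-(R-1)/\ln 2$ compensates for the $\nu$-dependence of the normalization and the row scaling, which contribute $\sum H$-type constant terms. I would pin it down by checking the case $\avgsnr\to0$, where the capacity must vanish.

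The main obstacle is Step 4 together with the bookkeeping in Step 2. The difficulty is expressing $\partial_\nu$ of the confluent hypergeometric entries at $\nu=0$ so that they split cleanly into the finite sum over $k=R-T,\dots,R-2$ of $\xi_k H_k$ terms plus the single $E_1$ term. It is also delicate to track the signs and powers $v_i^{k-R+T}$ that result from pulling the Vandermonde factors of $\mathbf{v}$ into the determinant. My first attempt would be the terminating expansion $U(-\nu, \cdot, z)$ for the entries of $\bfPsi_2$ and of the sum part of $\widetilde{\bfPsi}_1$. I would then apply L'Hôpital in $\nu$ to the remaining non-terminating piece, verifying the resulting formula numerically against Monte Carlo simulation for small $R,T$.
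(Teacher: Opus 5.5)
\textbf{Genuine gap in Step~1.} Your Step~3 (Jacobi's formula applied to a determinantal MGF) is exactly the key step of the paper's proof. The route you propose for producing that determinant, however, breaks down at Step~1. A single HCIZ integral can absorb only one of the two correlation matrices. Write $\bfD_{\mathbf u}^{1/2}\bfH_{\iid}\bfD_{\mathbf v}^{1/2}=\mathbf U\Lambda^{1/2}\bfV^{\mathsf H}$ and integrate $\mathbf U$ out by HCIZ. The result still depends on $\bfV$ through the eigenvalues of $\Lambda^{1/2}\bfV^{\mathsf H}\bfD_{\mathbf v}^{-1}\bfV\Lambda^{1/2}$, and the remaining integral over $\bfV$ is not of HCIZ type. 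In Schur-function language, the eigenvalue density of the doubly correlated Wishart matrix is a series $\sum_\lambda c_\lambda\,s_\lambda(\cdot)\,s_\lambda(\cdot)\,s_\lambda(\Lambda)$ with three Schur factors. It does not reduce to the product-of-two-determinants (biorthogonal) form you assert, so Andr\'eief's identity has nothing to act on. Indeed, the expression $\det[e^{-\lambda_i/(u_jv_k)}]$ you wrote has three free indices and is not the determinant of any matrix. Your construction works only when one side is uncorrelated (e.g.\ $T=1$).

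For the same reason, the Step~2 entries $\int_0^\infty x^k(1+\frac{\avgsnr}{T}x)^{\nu}e^{-x/(u_jv_i)}\,dx$ do not have the structure of the stated ones. In the correct MGF, $\mathbf u$ and $\mathbf v$ enter through the products $u_jv_i$, weighted by $s$-dependent Pochhammer factors $(1-R-s/\ln 2)_k$. At $s=0$ these give $\xi_k$ and terminate at $k=R-1$; their derivatives give $\xi_kH_k$; and the non-terminating tail $k\ge R$ resums into $e^{x}E_1(x)$. Obtaining them needs a character (Schur) expansion of $\det(\bfI+\cdot)^{\nu}$, the identity $\smean[s_\lambda(\cdot)]\propto s_\lambda(\mathbf u)s_\lambda(\mathbf v)$, and a generalized Cauchy--Binet summation (equivalently, the determinantal formula for hypergeometric functions of two matrix arguments). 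The paper sidesteps all of this by importing Kiessling--Speidel's closed-form MGF $\mathcal Z(s)=\frac{\varrho}{\psi_R(s)}\det\widetilde{\bfPsi}(s)$ and only differentiating it.

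\textbf{The additive constant is also unaccounted for.} If the normalization were simply $\det\widetilde{\bfPsi}(0)$, as your Step~2 says, the result would be $\textrm{Tr}(\widetilde{\bfPsi}^{-1}\partial\widetilde{\bfPsi})$ with no constant at all. The term $-\frac{R-1}{\ln 2}$ exists only because $\widetilde{\bfPsi}(s)$ in the statement is defined after stripping an $s$-dependent prefactor $\psi_R(s)$. The paper uses $\mathcal Z(0)=1$ to get $\varrho/\psi_R(0)=1/\det\widetilde{\bfPsi}(0)$, and $\psi_R'(0)=\frac{R-1}{\ln 2}\psi_R(0)$ to get the constant.

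Take $R=2$, $T=1$ as an example. Set $\nu=s/\ln 2$ and $f(u)=\int_0^\infty e^{-t}(1+\avgsnr u v t)^{\nu}dt$. Extending the stated entries to $s\neq 0$ gives $[\widetilde{\bfPsi}_1(s)]_{1j}=(1+\nu)\avgsnr\,u_jf(u_j)$ and $\bfPsi_2=(1,\,1)$. The true MGF is $\frac{u_1f(u_1)-u_2f(u_2)}{u_1-u_2}$, so $\psi_R\propto 1+\nu$, whose logarithmic derivative at $0$ is $\frac{1}{\ln 2}=\frac{R-1}{\ln 2}$.

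Your proposed $\avgsnr\to 0$ check cannot replace identifying this prefactor. You would first have to prove the discrepancy is independent of $\avgsnr$, $\mathbf u$ and $\mathbf v$. You would then still need the small-$\avgsnr$ asymptotics of a trace whose entries mix different powers of $\avgsnr$ with the $e^{x}E_1(x)$ terms.
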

\begin{IEEEproof}
According to \cite{kiessling2004exact}, the moment generating function (MGF) of the mutual information (MI) 
    of double-correlated Rayleigh channels with the transmit and receive covariance matrices being $\bfD_{\mathbf{v}}$ and $\bfD_{\mathbf{u}}$, respectively, is given by the following equation:
\begin{align}
\mathcal{Z}(s) = 
    \frac{ \varrho }{ \psi_{R}(s) }
    \det\begin{bmatrix} \widetilde{\bfPsi}_{1}(s) \\ {\bfPsi}_{2}(s) \end{bmatrix},
\end{align}
where, for the sake of brevity, details of $\varrho$, $ \psi_{R}(s)$, $\widetilde{\bfPsi}_{1}(s)$, and ${\bfPsi}_{2}(s)$ are given by \cite{kiessling2004exact} and are omitted in this proof.
    Of note, $\widetilde{\bfPsi}_{1} = \widetilde{\bfPsi}_{1}(0)$, 
        $\partial\widetilde{\bfPsi}_{1} = \left. \frac{\partial}{\partial s} \widetilde{\bfPsi}_{1}(s) \right|_{s\to 0}$,
        ${\bfPsi}_{2} = {\bfPsi}_{2}(0)$, and
        $\partial{\bfPsi}_{2} = \left. \frac{\partial}{\partial s} {\bfPsi}_{2}(s) \right|_{s\to 0}$.
Accordingly, the approximated EC of the Weichselberger channel is derived as follows:
\begin{align}
\widetilde{C}_{\textnormal{erg}}\left( \mathbf{u}, \mathbf{v} \right)
    &=  \left. \frac{\partial}{\partial s} \mathcal{Z}(s) \right|_{s\to 0},
    \label{eq:CW_proof} \\
    &=  \frac{ \varrho }{ \psi_{R}(0) } 
    \!\left[ 
        \left. \frac{\partial}{\partial s} \det{\begin{bmatrix} \widetilde{\bfPsi}(s) \end{bmatrix}} \right|_{s\to 0}  
        \! - \det{\begin{bmatrix} \widetilde{\bfPsi}(0) \end{bmatrix}} \frac{ \psi'_{R}(0) }{ \psi_{R}(0) }
    \right]. \notag
\end{align}

Since $\frac{ \psi_{R}(0) }{ \varrho }=   \det{\begin{bmatrix} \widetilde{\bf\Psi}(0) \end{bmatrix}} $ as $\mathcal{Z}(0) = 1$,  
    and $\psi'_{R}(0) = \frac{R-1}{\ln{2}} \psi_{R}(0)$ \cite[Eq. (36)]{kiessling2004exact}, 
    we further derive Eq. \eqref{eq:CW_proof} as:
\begin{align}
\widetilde{C}_{\textnormal{erg}}\left( \mathbf{u}, \mathbf{v} \right)
    &=  \frac{ \left. \frac{\partial}{\partial s} \det{\begin{bmatrix} \widetilde{\bfPsi}(s) \end{bmatrix}} \right|_{s\to 0} }{ \det{\begin{bmatrix} \widetilde{\bfPsi} \end{bmatrix}} } 
        -   \frac{R-1}{\ln{2}}.
    \label{eq:C_ergTR_proof}
\end{align}

Unlike the result in \cite{kiessling2004exact}, in the present study, we adopt \cite[Eq. (9)]{golberg1972derivative} for differentiating a determinant as: 
\begin{align}
\frac{\partial}{\partial s} \det\left[ \widetilde{\bfPsi}(s) \right] 
    =   \det\left[ \widetilde{\bfPsi}(s) \right] \trace*{\widetilde{\bfPsi}(s)^{-1} \partial\widetilde{\bfPsi}(s)},
\end{align}
then plugging back in to Eq. \eqref{eq:CW_proof}, which yields Eq. \eqref{eq:exact_Ergodic}. 
    This completes the proof of Theorem \ref{lem:exact_Ergodic}.
\end{IEEEproof}

\textbf{\textit{Capacity computation and comparison:}}
We now assume that $R = T = K$ on the channel to aid in capacity analysis in high-SNR regime. 
    This assumption can be relaxed using advanced RMT techniques that are beyond the scope of the present paper.
    
In the high-SNR extreme, where $\avgsnr \to \infty$ and $K\to\infty$, 
    the EC gap between the Weichselberger channel and any separable channel with transmit and receive covariances $\bfD_{\mathbf{u}}$ and $\bfD_{\mathbf{v}}$, respectively, converges to the following:
\begin{align}
\Delta {C}_{\textnormal{erg}}\left( \mathbf{u}, \mathbf{v} \right)
\to 
    & \underbrace{\smean\!\left[ \log_2\det\left( {\bfH}^{\tph} {\bfH} \right) \right]\!
    - \smean\!\left[ \log_2\det\left( {\bfZ}^{\tph} {\bfZ} \right) \right]\!}\limits_{\Delta {C}_{\textnormal{erg}, \iid}}
\notag\\
    & - \log_2\det\left( \bfD_{\mathbf{u}} \bfD_{\mathbf{v}} \right),
\label{eq:erg_gap_uv}
\end{align}
where $\Delta {C}_{\textnormal{erg}, \iid}$ (bps/Hz) denotes the EC gap between the Weichselberger channel model and the spatially uncorrelated Rayleigh fading model.

\begin{Lemma}
\label{lem:erg_gap_kld}
In the high SNR extreme, where $\avgsnr\to\infty$ and $K\to\infty$,
    the EC gap between the Weichselberger model the KLD-enabled rank-$1$ spatially correlated Rayleigh model can be obtained as follows:
\begin{align}
\Delta {C}_{\textnormal{erg}, \kld}
    &\to
    - \frac{1}{K} \sum_{i=1}^{K} \sum_{j=1}^{K} \log_2\left( \frac{P_{i,j}}{\bar{R}_{i} \bar{C}_{j}} \right)
    \notag\\
    &\qquad
        - \frac{1}{2} \sum_{k=1}^{K} \log_2\left( \bar{R}_{k} \bar{C}_{k} \right)
        - \frac{ J^{\star} }{ K\ln 2 } \ge 0,
    \label{eq:erg_gap_kld}
\end{align}
where $\bar{R}_{i} \triangleq \frac{\sum_{k=1}^{K}{ {P}_{i, k} }}{K}$ and
    $\bar{C}_{j} \triangleq \frac{\sum_{k=1}^{K}{ {P}_{k, j} }}{K}$ denote the arithmetic means of the $i^{\uth}$ row and $j^{\uth}$ column of the coupling matrix $\bfP$, respectively.
\end{Lemma}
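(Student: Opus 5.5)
Starting from the general high-SNR gap \eqref{eq:erg_gap_uv}, we substitute the KLD factors $\mathbf{u}=\bfalpha^{\star}$, $\mathbf{v}=\bfbeta^{\star}$ from Table~\ref{tab:factors}. This gives $\Delta C_{\textnormal{erg},\kld}\to \Delta C_{\textnormal{erg},\iid}-\sum_{k}\log_2(\alpha_k^\star\beta_k^\star)$. Both terms then need to be written in terms of $\bfP$.

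\emph{First term.} We evaluate $\Delta C_{\textnormal{erg},\iid}$ in the regime $K\to\infty$. For the Weichselberger matrix $\bfH=\bfU_{\bfR}[\bfP^{\circ\frac12}\odot\bfH_{\iid}]\bfU_{\bfT}^{\tph}$, the unitary factors do not affect $\log\det(\bfH^{\tph}\bfH)$. The plan is to use the large-system deterministic equivalent for variance-profile matrices (Hachem--Loubaton--Najim type; cf.\ \cite{hachem2008new}) at $\avgsnr\to\infty$. In that limit the log-determinant of a matrix with variance profile $\bfP$ differs from that of the i.i.d.\ model by the solution of the Sinkhorn-type fixed-point system. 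The heuristic we would aim to justify is that, to first order, this difference is $\frac{1}{K}\sum_{i,j}\log_2 P_{i,j}$. In the product-profile case $\bfP=\mathbf{a}\mathbf{b}^\top$ this is exact, since $\log\det$ splits as $\sum\log a_i+\sum\log b_j$. For a general profile we would linearize the fixed-point equations around a separable profile. This is the step we expect to be the main obstacle. The natural first attempt is to write $P_{i,j}=\alpha_i^\star\beta_j^\star\,\epsilon_{i,j}$ and expand the fixed-point equations in $\ln\epsilon_{i,j}$, keeping the first-order term.

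\emph{Second term.} We use the stationarity conditions \eqref{eq:14}. At the optimum, $\alpha_i^\star=\frac1K\sum_j P_{i,j}/\beta_j^\star$ and $\beta_j^\star=\frac1K\sum_i P_{i,j}/\alpha_i^\star$. Combined with the scale invariance $(\bfalpha,\bfbeta)\mapsto(c\bfalpha,\bfbeta/c)$, this lets us normalize so that $\sum_k\log\alpha_k^\star\beta_k^\star$ compares with the arithmetic means $\bar R_i,\bar C_j$. We would apply Jensen (AM--GM) to these row and column averages and symmetrize the row and column contributions, which produces the factor $\frac12\sum_k\log_2(\bar R_k\bar C_k)$.

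\emph{Collecting terms.} Next we evaluate $J^\star$ from \eqref{eq:12} at the optimum. The identity $\sum_{i,j}P_{i,j}/(\alpha_i^\star\beta_j^\star)=K^2$ (sum of \eqref{eq:14}) gives $J^\star=-\sum_{i,j}\ln\bigl(P_{i,j}/(\alpha_i^\star\beta_j^\star)\bigr)$. This converts the residual $\sum_{i,j}\log(\alpha_i^\star\beta_j^\star)$ into the $-J^\star/(K\ln2)$ term. Substituting everything gives \eqref{eq:erg_gap_kld}.

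\emph{Nonnegativity.} We bound $\log_2(P_{i,j}/(\bar R_i\bar C_j))$ from above using concavity of the logarithm and the marginal identities \eqref{eq:5}. Since $J^\star\ge0$, the remaining combination of terms must be shown nonnegative; we would do this with AM--GM, with equality exactly when $\bfP$ is rank-$1$. At that point $J^\star=0$ and $\bfP=\bar{\mathbf R}\bar{\mathbf C}^\top/\text{mean}$.
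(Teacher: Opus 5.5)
There is a genuine gap. Your outer structure matches the paper: you substitute $\mathbf u=\bfalpha^\star$, $\mathbf v=\bfbeta^\star$ into \eqref{eq:erg_gap_uv}, then use $\sum_{i,j}P_{i,j}/(\alpha_i^\star\beta_j^\star)=K^2$ to get $\sum_k\log_2(\alpha_k^\star\beta_k^\star)=\frac{J^\star}{K\ln 2}+\frac1K\sum_{i,j}\log_2P_{i,j}$. What is missing is the source of the $\bar R_k,\bar C_k$ terms.

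In the paper these terms come entirely from $\Delta C_{\textnormal{erg},\iid}$, and that term is not computed from first principles. The proof imports the Kronecker gap $\Delta C_{\textnormal{erg},\kron}\leadsto-\frac12\sum_k\log_2(\bar R_k\bar C_k)$ from \cite[Theorem 3]{Raghavan2010TIT}. It then evaluates \eqref{eq:erg_gap_uv} at the Kronecker factors $u_k=\bar R_k$, $v_k=\bar C_k$ and solves for $\Delta C_{\textnormal{erg},\iid}\leadsto\frac12\sum_k\log_2(\bar R_k\bar C_k)$. The display \eqref{eq:erg_gap_iid} drops the $\frac12$, but the $\frac12$ is what the inversion gives and what \eqref{eq:erg_gap_kld} requires. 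After substituting this and the $J^\star$ identity, you split $\frac1K\sum_{i,j}\log_2P_{i,j}=\frac1K\sum_{i,j}\log_2\frac{P_{i,j}}{\bar R_i\bar C_j}+\sum_k\log_2(\bar R_k\bar C_k)$, and \eqref{eq:erg_gap_kld} follows by pure algebra.

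Neither of your substitutes for that imported result can produce the statement.

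\emph{Your estimate of $\Delta C_{\textnormal{erg},\iid}$.} Taking $\Delta C_{\textnormal{erg},\iid}\approx\frac1K\sum_{i,j}\log_2P_{i,j}$ and combining it with the exact $J^\star$ identity gives $\Delta C_{\textnormal{erg},\kld}\to-J^\star/(K\ln2)\le 0$. That has no $\bar R,\bar C$ terms and the wrong sign. Linearizing around a separable profile cannot repair this, because the two values already differ at zeroth order. For $\bfP=\mathbf a\mathbf b^{\top}$ with $P_{\bfH}=K^2$, your estimate is exact and equals $\sum_k\log_2(a_kb_k)$, while the target value is half of that. So the stated formula can only be reached by importing the approximate Kronecker-gap result, not by an independent computation of $\Delta C_{\textnormal{erg},\iid}$.

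\emph{The ``Jensen and symmetrize'' step.} Jensen yields inequalities, not an identity with the exact coefficient $\frac12$. Moreover, $\sum_k\log_2(\alpha_k^\star\beta_k^\star)$ is already fixed exactly by $J^\star$ through \eqref{eq:12}--\eqref{eq:14}. Using it once to create $\frac12\sum_k\log_2(\bar R_k\bar C_k)$ and again to create $-J^\star/(K\ln2)$ counts it twice.

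\emph{Nonnegativity.} The $J^\star$ term enters with a minus sign, so $J^\star\ge0$ works against you: the remaining terms must dominate $J^\star/(K\ln2)$, not merely be nonnegative. Also, for rank-one $\bfP$ with $P_{\bfH}=K^2$, the right-hand side of \eqref{eq:erg_gap_kld} reduces to $-\frac12\sum_k\log_2(\bar R_k\bar C_k)$. By AM--GM this vanishes only when $\bfP$ is constant, so equality does not hold ``exactly when $\bfP$ is rank-one.'' The paper's proof itself does not argue the ``$\ge 0$'' part.
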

\begin{IEEEproof}
By evaluating the EC gap against the Kronecker model, we obtain the following:
\begin{align}
\Delta {C}_{\textnormal{erg}, \kron}
    \leadsto
    -\frac{1}{2} \sum_{k=1}^{K} \log_2\left( \bar{R}_{k} \bar{C}_{k} \right),
\label{eq:69}
\end{align}
which is directly deduced from \cite[Theorem 3]{Raghavan2010TIT}. 
    Here, the arrow `$\leadsto$' means `converges approximately to', as the exact asymptotic behavior is unavailable \cite{Raghavan2010TIT}.
Moreover, since the Kronecker model enforces that $u_k \leftarrow \bar{R}_{k}$ and $v_k \leftarrow \bar{C}_{k}$, which is deduced from Table \ref{tab:factors}, we obtain the following:
\begin{align}
\Delta {C}_{\textnormal{erg}, \iid}
\leadsto
    \sum_{k=1}^{K} \log_2\left( \bar{R}_{k} \bar{C}_{k} \right).
\label{eq:erg_gap_iid}
\end{align}

To obtain the desired EC gap, we first deduce from Eq. \eqref{eq:14} that 
    $\alpha_i^{(\kappa)} \!= \frac{1}{K} \sum_{j=1}^{K}{ \frac{ P_{i,j} }{\beta_j^{(\kappa)}} }$ and 
    $\beta_j^{(\kappa)} \!= \frac{1}{K} \sum_{i=1}^{K}{ \frac{ P_{i,j} }{\alpha_i^{(\kappa)}} }$ for iteration~$\kappa$. 
Combining the aforementioned equalities yields the following:
\begin{align}
    K^2 - \sum_{i=1}^{K} \sum_{j=1}^{K} \frac{ P_{i, j} }{ \alpha_i^{\star} \beta_j^{\star} } = 0.
\end{align}

Next, we rewrite objective function $J^{\star} = J\left( \bfalpha^{\star}, \bfbeta^{\star} \right)$ initially defined in Eq. \eqref{eq:12}.
By applying the property derived above, we obtain the following:
\begin{align}
\log_2\det\left( \bfD_{\bfalpha^{\star}} \bfD_{\bfbeta^{\star}} \right)
    &=  \frac{1}{K} \sum_{i=1}^{K} \sum_{j=1}^{K} \log_2 \left( \alpha_i^{\star} \beta_j^{\star} \right) \\
    &=  \frac{ J^{\star} }{K \ln 2} + \frac{1}{K} \sum_{i=1}^{K} \sum_{j=1}^{K} \log_2 P_{i, j}.
\label{eq:log2_ui_vj} 
\end{align}

Plugging Eq. \eqref{eq:erg_gap_iid} and Eq. \eqref{eq:log2_ui_vj} back into the initial EC gap in Eq. \eqref{eq:erg_gap_uv} and performing the necessary algebraic reductions, we obtain Eq. \eqref{eq:erg_gap_kld} and conclude the proof of Lemma \ref{lem:erg_gap_kld}.
\end{IEEEproof}

Applying non-negativity property $J^{\star} \ge 0$ to Eq. \eqref{eq:erg_gap_kld} yields a lower bound for the EC gap between the Kronecker channel and the KLD-enabled channel, which is obtained as follows:
\begin{align}
\Delta {C}_{\textnormal{erg}, \kron}
-   \Delta {C}_{\textnormal{erg}, \kld}
\ge \frac{1}{K} \sum_{i=1}^{K} \sum_{j=1}^{K} 
    \log_2 \left( \frac{ P_{i, j} }{ \bar{R}_{i} \bar{C}_j } \right),
\label{eq:erg_gap_kld_kron}
\end{align}
where the equality occurs when $J^{\star} = 0$, which is satisfied if and only if $\bfP$ is already rank-$1$. 
Moreover, the lower bound is nonnegative if the following condition is met:
\begin{align}
\left[ \prod_{i=1}^K \prod_{j=1}^K P_{i,j} \right]^{\frac{1}{K^2}} 
    \ge \left[ \prod_{i=1}^K \bar{R}_i \right]^{\frac{1}{K}} \left[ \prod_{j=1}^K \bar{C}_j \right]^{\frac{1}{K}},
\end{align}
that is, when the geometric mean of the entries of $P$ exceeds the product of the geometric means of its row and column averages. 
    When the above condition holds, a larger value on the right-hand side of Eq. \eqref{eq:erg_gap_kld_kron} indicates that the Kronecker model yields a larger capacity gap. Conversely, as established by Eq. \eqref{eq:erg_gap_kld}, the proposed KLD-enabled rank-$1$ model maintains a lower EC gap.

    In physical propagation environments, a power coupling matrix satisfying the above inequality characterizes a {\it sparse and non-regular} scattering structure, 
    where the total channel power is accumulated within a few entries of $\bfP$, while the remaining entries are negligible \cite{Raghavan2010TIT}.
Such environments cause severe EC underestimation in the Kronecker model, making the proposed KLD-enabled model a promising alternative for an accurate capacity estimation in high-SNR regimes.

\section{Numerical Results and Discussion}
\label{sec:results}

Unless stated otherwise, in this section, we consider a point-to-point MIMO system with $T = 8$ transmit and $R = 8$ receive antennas.
    We examine the following three propagation environments: sparse, intermediate, and rich scattering, where the corresponding power coupling matrices are adopted from \cite[Eq. (60)-(62)]{Raghavan2010TIT}.
The corresponding matrices $\bfP$ are normalized so that the total average channel power satisfies $P_{\bfH} = RT = 64$, and the rank-$1$ KLD-enabled factorization of Algorithm \ref{alg:optimal_rank1_nmf} is performed with a convergence tolerance of $\epsilon = 10^{-12}$.
    The weighting matrices for the moment matching procedure are set to $\boldsymbol{s}_{\bfR} = \bfalpha^{\star}$ and $\boldsymbol{s}_{\bfT} = \boldsymbol{\sigma}_{\bfT}$, which are heuristically chosen based on numerical experimentation.

\subsection{Convergence of the KLD-Enabled Factorization}

\begin{figure}
    \centering
    \includegraphics[width=0.8\linewidth]{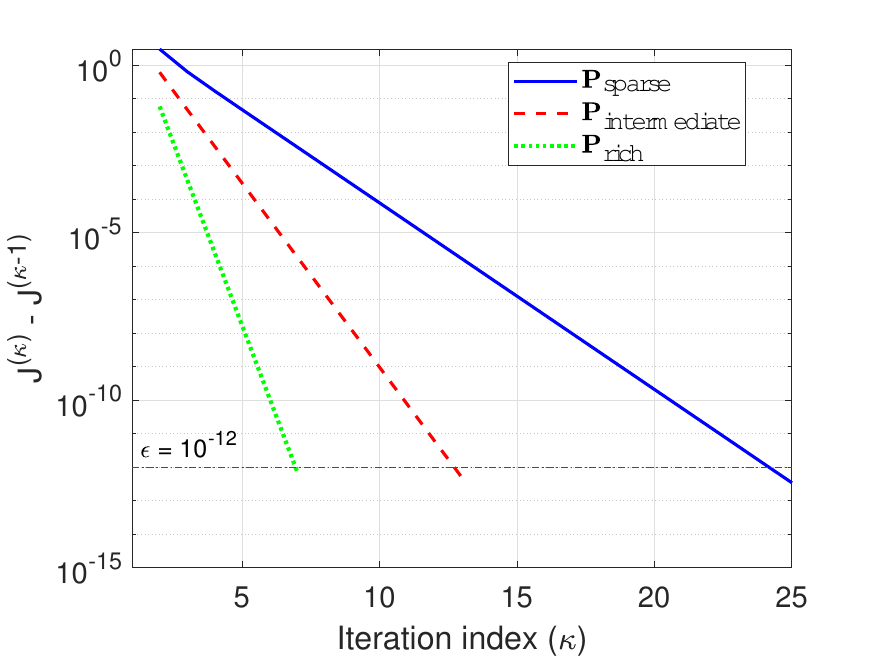}
    \caption{Convergence of $J^{(\kappa-1)} - J^{(\kappa)}$ for the KLD-enabled rank-$1$ factorization across sparse, intermediate, and rich scattering environments, where $J^{(\kappa)} = J\left( \bfalpha^{(\kappa)}, \bfbeta^{(\kappa)} \right)$.}
    \label{fig:figure1_kld}
\end{figure}

Fig. \ref{fig:figure1_kld} shows the convergence rate of the KLD-enabled rank-$1$ factorization presented by Algorithm \ref{alg:optimal_rank1_nmf} under the three considered scattering environments, defined by $\bfP_{\textsf{sparse}}$, $\bfP_{\textsf{intermediate}}$, and $\bfP_{\textsf{rich}}$. 
    As can be seen in Fig. \ref{fig:figure1_kld}, objective function $J\left( \bfalpha^{(\kappa)}, \bfbeta^{(\kappa)} \right)$ reliably monotonically descends with each iteration $\kappa$, which validates that optimization sequence $\{ \bfalpha^{(\kappa)}, \bfbeta^{(\kappa)} \}$ reliably descends towards a stationary point without fluctuation. 
In fact, the algorithm demonstrates rapid convergence in fewer than ${\bf 25}$ iterations across all scenarios, even under the strict tolerance error of $\epsilon = 10^{-12}$. 
    The convergence rate is governed by the structure of $\bfP$: the rich scattering environment, whose coupling matrix is nearly rank-1, is well-conditioned and reaches the tolerance $\epsilon = 10^{-12}$ in roughly $7$ iterations, whereas the sparse environment, characterized by highly concentrated power entries and a coupling matrix far from rank-$1$, converges more slowly, requiring about $24$ iterations. 
In all cases, the strict tolerance is met in fewer than $25$ iterations, combined with the $O(RT)$ per-iteration cost, confirms the computational efficiency of the proposed factorization.

\subsection{Ergodic Capacity Accuracy and Scalability}

\begin{figure}[!t]
    \centering
    \subfloat[]{
        \includegraphics[width=0.9\linewidth]{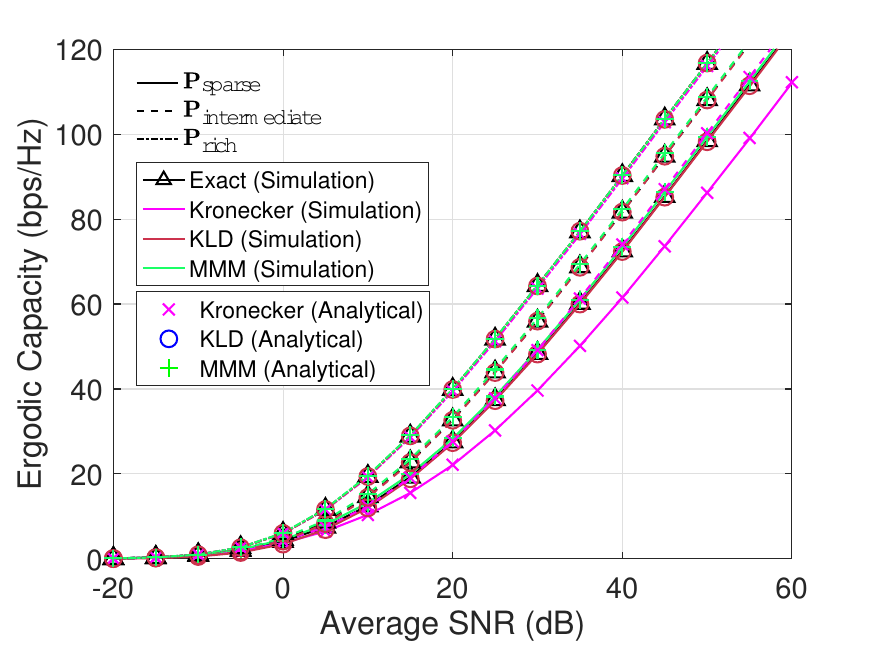}
        \label{fig:figure3_ergodic_capacity}
    }
    \\
    \subfloat[]{
        \includegraphics[width=0.9\linewidth]{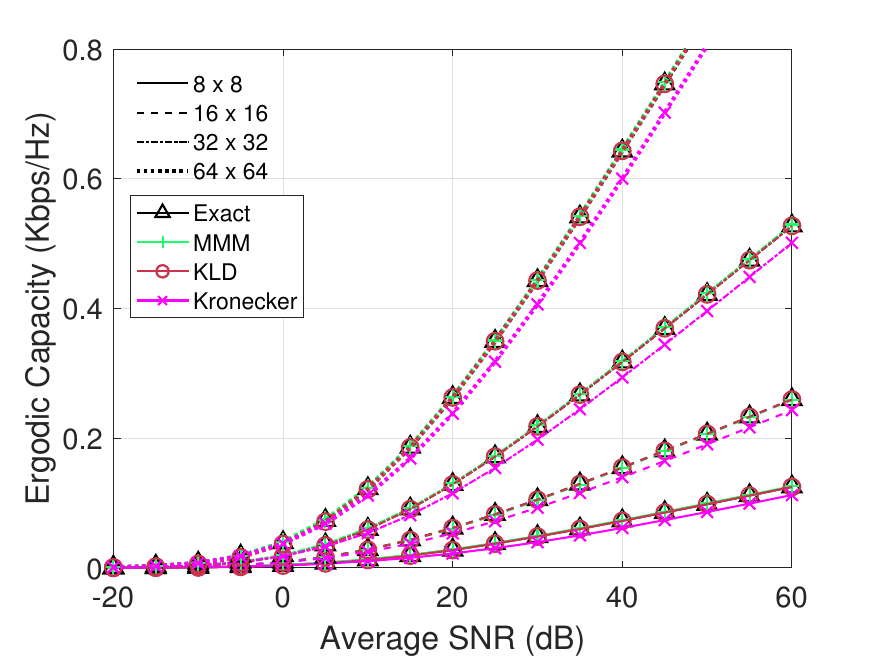}
        \label{fig:figure7_asymptotic}
    }
    \caption{Ergodic capacity vs. average SNR ($\bar{\gamma}$): (a) across sparse, intermediate, and rich scattering environments; (b) capacity scaling for MIMO antenna array dimensions in a sparse scattering environment. \label{fig:figure2_ergodic_capacity}}
\end{figure}

Fig.~\ref{fig:figure2_ergodic_capacity}(a) plots the EC against the average SNR ($\bar{\gamma}$) under the three examined environments 
to evaluate the accuracy of the proposed rank-$1$ approximations.
    As can be seen in the figure, the derived analytical results of Theorem~\ref{lem:exact_Ergodic} (illustrated by colored markers) perfectly match the simulation curves across all SNR regimes. 
    This validates the derived exact closed-form capacity expressions.
In line with \cite{Raghavan2010TIT}, as entries of $\bfP$ become more sparse, the traditional Kronecker model severely underestimates the true capacity under medium-to-high SNR regimes.
    In contrast, the proposed KLD-enabled rank-$1$ model matches the exact Weichselberger capacity curves across all SNR regimes, though slight mismatches are observed when using the MMM-enabled rank-$1$ model.
Yet, both approximations demonstrate that rank-$1$ models remain highly promising, as they accurately capture the EC with a significantly reduced complexity. 
    To further characterize the MMM solutions, the rounded solutions to the moment matching system in~Eq. \eqref{eq:transformed_matching_problem} for the sparse, 
intermediate, and rich scattering environments are $\boldsymbol{\sigma} = \left[0.0815, 0.2420\right]^{\top}$ and 
$\boldsymbol{p} = \left[5, 3\right]^{\top}$;
$\boldsymbol{\sigma} = \left[0.0858, 0.1783\right]^{\top}$ and 
$\boldsymbol{p} = \left[4, 4\right]^{\top}$; and
$\boldsymbol{\sigma} = \left[0.1135, 0.1605\right]^{\top}$ and 
$\boldsymbol{p} = \left[6, 2\right]^{\top}$, respectively.
    This reveals that richer scattering environments cause the normalized matrix $\bfW$ to become more isotropic.

To examine whether this accuracy is preserved at larger antenna regimes, Fig.~\ref{fig:figure2_ergodic_capacity}(b) evaluates scalability of the 
proposed rank-$1$ approximations by varying the MIMO antenna array dimensions from $8\times 8$ to $64\times 64$ under the sparse scattering environment.
    As can be seen in Fig.~\ref{fig:figure2_ergodic_capacity}(b), the KLD- and MMM-enabled models consistently match the exact capacity curves across all array configurations and SNR regimes, suggesting that the proposed framework scales well with increasing system dimensions.

\subsection{Eigenvalue Moment Matching: Accuracy and the Role of the Residual $\bfTheta_n$}

\begin{figure}
    \centering
    \includegraphics[width=0.9\linewidth]{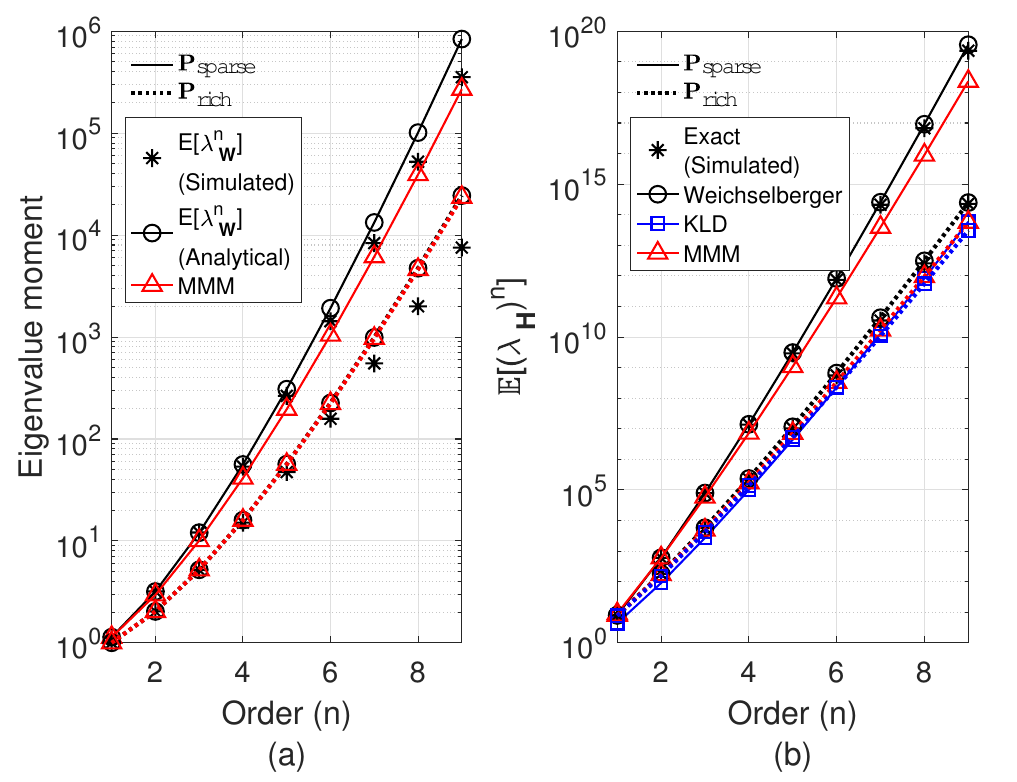}
    \caption{Illustration of the eigenvalue moments with varying orders $n$ under sparse and rich scattering conditions, comparing the  moments formulas against exact Monte Carlo simulations with $10^6$ independent realizations: 
    a) normalized channel gain $\bfW$; 
    b) Weichselberger channel gain $\bfG_{\bfH}$.}
    \label{fig:figure6_moments_verification}
\end{figure}

Fig.~\ref{fig:figure6_moments_verification} shows eigenvalue moments $\smean[\uplambda_{\bfW}^n]$ and
$\smean[\uplambda_{\bfH}^n]$ as a function of the order $n$, where the analytical 
curves (solid and dotted) are derived using Eq. \eqref{eq:exact_moment_Wk}, Eq. \eqref{eq:moment_cumulant_Wishart}, Eq. \eqref{eq:mat_cumulant_W}, Eq. \eqref{eq:moment_cumul_Wtol}, and Eq. \eqref{eq:scalar_moment_Wtol} in Section~\ref{sec:moment_of_matrix} 
while neglecting residual term $\bfTheta_n$.
    As expected in Fig. \ref{fig:figure6_moments_verification}(a), the analytical eigenvalue moments perfectly match the exact simulated moments of the target matrix $\bfW$ for the first two orders and start diverging beyond the third order moment.
Interestingly, applying integer rounding to the solutions of the moment-matching system actually brings the moments closer to the exact values in the sparse scattering environment.
    This indicates that the residual term $\bfTheta_{n}$ marginally contributes to the eigenvalue moments of $\bfG_{\bfH}$ and the overall EC.
The major finding in Fig. \ref{fig:figure6_moments_verification}(b) is that the analytical eigenvalue moments accurately match the exact eigenvalue moments of the physical channel gain $\bfG_{\bfH}$ across both environments.
    
% To further investigate the mismatch in low-to-medium SNR regimes, in what follows, we consider two unusual scattering scenarios.

\subsection{Performance Under Highly Structured Coupling}

\begin{figure}
    \centering
    \includegraphics[width=\linewidth]{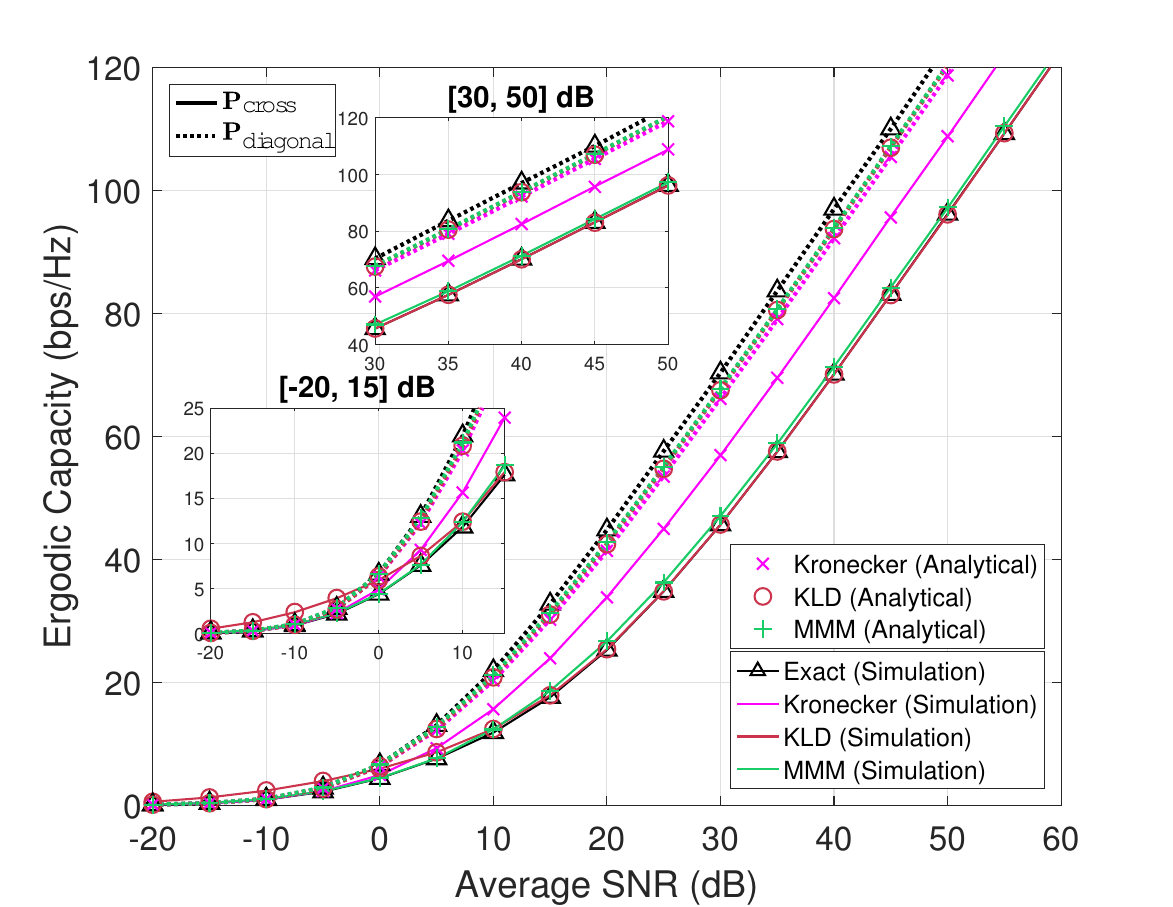}
    \caption{Ergodic capacity vs. average SNR ($\overline{\gamma}$) for highly structured coupling matrices $\bfP_{\textsf{diag}}$ and $\bfP_{\textsf{cross}}$. \label{fig:figure4_channel_regularity}}
\end{figure}
In this section, the power coupling matrices are written in the normalized form as follows:
\begin{align}
\left[ \bfP \right]_{i, j}
    =   \frac{ p_{i, j} }{ \frac{1}{RT} \sum_{r=1}^{R} \sum_{t=1}^{T} {p_{r, t}} },
\end{align}
where $\{ p_{i, j} \}$ is a set of random variables (RVs) defined on $[0, m_{i, j}]$.  
    By adjusting $m_{i, j}$ at specific values, we can generate different structured coupling matrices to model various physical radio environments.
In this section, we examine the EC for highly structured spatial power matrices previously mentioned in \cite{Weichselberger2006TWC, ren2026multi}-- namely, the cross-like and diagonal structures. 
    Once again, the traditional Kronecker model fails to match the exact EC, since it spreads out the total channel power across the entries of $\bfP$, thus inflating the true degrees of freedom (DoF) of $\bfH$. 
We now examine capacity mismatches of the KLD- and MMM-enabled models under each individual coupling structure.
A realization of each structure is shown at the top of the next~page. 

\begin{table*}[!t]
\small
\noindent
\begin{minipage}[l]{0.485\linewidth}
\begin{align}
&\scriptsize\bfP_{\textsf{cross}} = \notag\\
&\scriptsize\begin{bmatrix}
    0.0348 & 0.0191 & 6.1073 & 0.0558 & 0.0090 & 0.0680 & 0.0322 & 0.0171 \\
    0.0021 & 0.0026 & 6.8633 & 0.0495 & 0.0599 & 0.0424 & 0.0094 & 0.0232 \\
    0.2926 & 5.9513 & 7.5283 & 4.9849 & 0.5388 & 1.6490 & 4.3739 & 3.6708 \\
    0.0440 & 0.0014 & 2.9107 & 0.0463 & 0.0450 & 0.0110 & 0.0514 & 0.0328 \\
    0.0157 & 0.0489 & 6.1194 & 0.0187 & 0.0688 & 0.0297 & 0.0097 & 0.0741 \\
    0.0345 & 0.0150 & 3.3489 & 0.0697 & 0.0540 & 0.0223 & 0.0136 & 0.0286 \\
    0.0202 & 0.0063 & 2.6051 & 0.0003 & 0.0681 & 0.0546 & 0.0544 & 0.0183 \\
    0.0511 & 0.0077 & 5.3774 & 0.0732 & 0.0511 & 0.0555 & 0.0437 & 0.0134
\end{bmatrix} \label{eq:P_cross}
\end{align}
\end{minipage}%
\begin{minipage}[l]{0.51\linewidth}
\begin{align}
&\scriptsize\bfP_{\textsf{diag}} = \notag\\
&\scriptsize\begin{bmatrix}
    4.7120 & 0.0858 & 0.0325 & 0.0470 & 0.0468 & 0.0845 & 0.0940 & 0.1161 \\
    0.0261 & 8.6129 & 0.1127 & 0.0639 & 0.0027 & 0.0274 & 0.0498 & 0.0183 \\
    0.0565 & 0.0365 & 6.9320 & 0.1151 & 0.0221 & 0.0979 & 0.0961 & 0.0702 \\
    0.0938 & 0.0730 & 0.0460 & 11.7124 & 0.0549 & 0.0534 & 0.0924 & 0.0936 \\
    0.0555 & 0.0181 & 0.1121 & 0.0203 & 3.6522 & 0.0585 & 0.0860 & 0.0269 \\
    0.1110 & 0.1004 & 0.0991 & 0.1179 & 0.0254 & 1.7186 & 0.0055 & 0.0905 \\
    0.1041 & 0.0175 & 0.0902 & 0.0326 & 0.0759 & 0.1096 & 11.6444 & 0.0745 \\
    0.0715 & 0.0035 & 0.0879 & 0.0381 & 0.0072 & 0.0261 & 0.0599 & 11.4805
\end{bmatrix} \label{eq:P_diag}
\end{align}
\end{minipage}
\end{table*}

{\it 1) Cross-like coupling structure}, characterized by $\bfP_{\textsf{cross}}$ in Eq. \ref{eq:P_cross}, which produces full-rank correlation matrices at both the transmit and the receive link ends. 
    However, the instantaneous channel realizations generated from this structure are approximately rank-$2$. 
This physical behavior aims to mimic the ``keyhole channel phenomenon'', where a physical bottleneck restricts the spatial DoF despite rich local scattering at both the transmitter and the receiver. 
    Fig. \ref{fig:figure4_channel_regularity} confirms the limitation of the KLD-enabled rank-$1$ model in the low-to-medium SNR regime (e.g., $-20$ to $15$ dB interval), where capacity mismatches begin to appear.
The figure also highlights the role of MMM-enabled model, which acts as a robust alternative to the KLD-enabled model that maintains high accuracy at low SNR while causing only a slight capacity mismatch in the high SNR~region.
    
{\it 2) Diagonal coupling structure}, characterized by $\bfP_{\textsf{diag}}$ Eq. \ref{eq:P_diag}, where each transmit eigenmode is linked to a single receive eigenmode, through a single scatterer or a scattering cluster. 
    This structure aims to characterize spatial multiplexing frequently encountered in the presence of MIMO beamforming and/or RIS-enabled channels~\cite{ren2026multi}.
    A closer inspection of the $[30, 50]$ dB interval reveals a clear limitation of the rank-$1$ frameworks: specifically, the Kronecker, KLD-enabled, and MMM-enabled models all underestimate the exact EC. 
    Such a discrepancy highlights that rank-$1$ approximations cannot preserve the parallel, orthogonal eigenmodes required to support multiple independent data streams, thereby failing to capture the full spatial multiplexing gain in the high-SNR regime. 
Yet, the proposed KLD-enabled and MMM-enabled models still provide more reliable capacity estimates by yielding a narrower EC gap as compared to the traditional Kronecker model.

\section{Conclusion}

In this paper, to address the analytical complexities of the non-separable channel model, we proposed two rank-$1$ factorization frameworks for power coupling matrix $\bfP$ of the Weichselberger MIMO channels.
    To this end, we first introduced a KLD-enabled approximation computed via an efficient rank-$1$ IS-NMF under a strict error tolerance. 
Our analytical and numerical results revealed that the KLD-enabled model has a high accuracy in characterizing the EC in medium-to-high SNR regimes, especially for sparse scattering environments where the traditional Kronecker model severely underestimates performance. 
    To address the KLD model's capacity mismatches in low-SNR and highly structured scattering scenarios, we subsequently developed a novel MMM-enabled channel. 
By mapping eigenvalue moments of $\bfG_{\bfH}$ onto a Wishart distribution, we found that the MMM framework preserves the dominant statistical moments while maintaining robust capacity estimates across all SNR regimes. 
    Both proposed approximations were found to enable exact, closed-form expressions for the EC while providing highly accurate and computationally efficient alternatives to the traditional Kronecker model.

\appendices
\section{Proof of Lemma \ref{lem:exact_moment_Wk}}
\label{apx:lem:exact_moment_Wk}

The $(k_1, k_{n+1})^{\uth}$ element of the matrix-valued moment $\mean[\bfW_r^n]$ is derived by expanding $\mean[\bfW_r^n]$ using the matrix products as follows:
\begin{align}
\left( \mean[\bfW_r^n] \right)_{k_1, k_{n+1}}
    &=  \smean\Big[ \left( \bfW_r^n \right)_{k_1, k_{n+1}} \Big] \notag\\
    &=  \sum_{k_2, \dots, k_n } \smean\Big[ \tprod_{j=1}^{n} W_{r, (k_j, k_{j+1})} \Big],
\label{eq:apx_1_74}
\end{align}
where $W_{r, (i, j)}$ denotes the $(i,j)^{\uth}$ element of $\bfW_{r}$.

According to \cite[Corollary 6]{graczyk2003complex}, 
we obtain:
\begin{align}
\left( \mean[\bfW_r^n] \right)_{k_1, k_{n+1}}
    =  \sum_{\pi \in S_{n}} \sum_{k_2, \dots, k_n } \prod_{j=1}^{n}{ \Sigma_{r, (k_j, k_{\pi(j)+1})} }.
\label{eq:apx_1_76}
\end{align}
where $\Sigma_{r, (i, j)}$ denotes the $(i,j)^{\uth}$ element of $\bfSigma_r$, 
    $S_n$ refers to the set of all possible permutation of $\{ 1, 2, \dots, n\}$, and $\pi \in S_n$ is a permutation picked out from $S_n$.
As matrix $\bfSigma_r$ is diagonal, the product term in Eq. \eqref{eq:apx_1_76} is non-zero if and only if $k_j = k_{\pi(j)+1}$, $\forall j \in [1, n]$. 
    Hence, $\left( \mean[\bfW_r^n] \right)_{k_1, k_{n+1}} = 0$ for all $k_1 \ne k_{n+1}$, thus $\mean[\bfW_r^n]$ is also strictly diagonal.
    
The diagonal elements of $ \mean[\bfW_r^n]$ are rewritten as:
\begin{align}
\left( \mean[\bfW_r^n] \right)_{k_1, k_1}
    &=  \sum_{\pi \in S_{n}} \Sigma_{r, (k_1, k_1)}^{ l_1(\pi+1) } 
    \qquad\quad \prod_{ \mathclap{c \in C(\pi+1) \backslash \{ c_1(\pi+1) \}} } \quad \trace*{\bfSigma_r^{|c|}} \\
    &=   \sum_{j=1}^{n} { \Sigma_{r, (k_1, k_1)}^j } 
    \sum_{ \mathclap{ \substack{ \pi \in S_n \\ j = |c_1(\pi+1)| } } } \quad 
    \qquad\quad  \prod_{ \mathclap{c \in C(\pi+1) \backslash \{ c_1(\pi+1) \}} } \quad  
    \trace*{\bfSigma_r^{|c|}}, \label{eq:apx_79}
\end{align}
where $l_1(\pi+1) = |c_1(\pi+1)|$ denotes the length of the cycle containing the index $k_1$, i.e., $c_1(\pi+1) = (k_1, \dots, k_{n+1})$ and
    $\sum_{ \substack{ \pi \in S_n \\ j = |c_1(\pi+1)| } }$ is a shorthand notation for summation over all permutations in $S_n$ with the cycle $c_1(\pi+1)$ in $C(\pi+1)$ having the length of $j$.
Of note, the product term in Eq. \eqref{eq:apx_79} is invariant over all combinations of $c_1(\pi+1)$, each with length of $j$ and contains the starting element $k_1$.    
    We have $\binom{n-1}{j-1}$ combinations to choose the remaining $(j-1)$ elements from the available $(n-1)$ elements. 
Moreover, these $j$ elements can be arranged into a distinct cycle in $(j-1)!$ ways. 
Therefore, we rewrite Eq. \eqref{eq:apx_79} as follows:
\begin{align}
& \sum_{ \substack{\pi\in S_n \\ j = l_1(\pi+1)} } 
\prod_{ c\in C(\pi+1) \backslash \{ c_1(\pi+1) \} }
\trace*{\bfSigma_r^{|c|}}
\notag\\
&\qquad\quad
    =   \binom{n-1}{j-1}(j-1)! \sum_{\pi \in S_{n-j}} 
    \prod_{c \in C(\pi)} \trace*{\bfSigma_r^{|c|}} \label{eq:apx_1_binom_} \\
&\qquad\quad
    \mathop{=}\limits^{(a)} \frac{(n-1)!}{(n-j)!} \smean\left[ \trace*{\bfW_r}^{n-j} \right],
\end{align}
where $(a)$ is due to \cite[Theorem 1]{graczyk2003complex}.

Plugging the above result back into Eq. \eqref{eq:apx_79} and considering that $\mean[\bfW_r^n]$ is strictly diagonal, we obtain Eq. \eqref{eq:exact_moment_Wk}. 
For the recursive form in Eq. \eqref{eq:ck_recursive}, we first rewrite $c^{(t)}_{r}$ as follows:
\begin{align}
c_k^{(t)} 
    =  \sum_{i=1}^{t} 
    \sum_{ \substack{\pi \in S_m \\ |c_1(\pi)| = i} } \trace*{\bfSigma_r^{i}}
    \prod_{ c \in C(\pi) \backslash \{c_1(\pi) \} } \trace*{\bfSigma_r^{|c|}}.
\end{align}

Similarly to the approach used to obtain Eq. \eqref{eq:apx_1_binom_}, forming $c_1(\pi)$ from the available $(t-1)$ elements yields $\binom{t-1}{i-1}(i-1)! = \frac{(t-1)!}{(t-i)!}$ distinct cycles. 
    This yields:
\begin{align}
c_k^{(t)} 
    &=  \sum_{i=1}^{t} \frac{(t-1)!}{(t-i)!} \trace*{\bfSigma_r^{i}}
    \underbrace{ \sum_{ \pi \in S_{t-i} }  \prod_{ c \in C(\pi) } \trace*{\bfSigma_r^{|c|}} }\limits_{c_k^{(t-i)}}.
\end{align}
This completes the proof of Lemma \ref{lem:exact_moment_Wk}.

\section{Proof of Theorem \ref{theo:mat_cumulant_W}}
\label{apx:mat_cumulant_W}
Using the multivariable Zassenhaus formula \cite[Eq. (2.1)]{wang2019multivariable} to expand $e^{\mathbbm{i} t \bfW}$, we obtain the following series for the CGF of $\bfW$:
\begin{align}
\mathbf{\Psi}_{\bfW}(t)
    =   \log\mean\left[ e^{\mathbbm{i} t \bfW_{1}} \dots e^{\mathbbm{i} t \bfW_{R}}
    e^{(\mathbbm{i} t)^2 \bfC_{2}} e^{(\mathbbm{i} t)^3 \bfC_{3}} \dots \right],
\label{eq:Phi_W_t}
\end{align}
where $\bfC_k$ denotes the homogeneous Lie polynomial in $\{ \bfW_r \}$ of degree $k$.
    To derive the $n^{\uth}$ order moment or cumulant, one must plug in the explicit formulas of $\bfC_2, \bfC_3, \dots, \bfC_n$.
These Lie polynomials were previously presented in \cite{wang2019multivariable} with 
    $\bfC_2 \triangleq\! \frac{1}{2} \sum_{1\le i < j \le R} [\bfW_j, \bfW_i]$ and $\bfC_3$ being as follows:
\begin{align}
{\bfC}_3 
    &\triangleq
\textstyle
    \sum_{1 \le i < j \le R} 
    \frac{1}{6} [\bfW_i, [\bfW_i, \bfW_j]] 
    + \frac{1}{3} [\bfW_j, [\bfW_i, \bfW_j]]
\notag\\
    &\quad
\textstyle
    + \frac{1}{3} \sum_{1 \le i < j < k \le R} 
    [\bfW_j, [\bfW_i, \bfW_k]] 
    + [\bfW_k, [\bfW_i, \bfW_j]],
\end{align}
where $[\mathbf{X}, \mathbf{Y}] \!=\! \mathbf{X} \mathbf{Y} \!-\! \mathbf{Y} \mathbf{X}$ denotes the commutator of $\mathbf{X}$ and $\mathbf{Y}$.
    
Let us hereby specifically focus on the third order cumulant.
    First, we use the following definition of matrix exponential:
\begin{align}
e^{(\mathbbm{i} t)^2 \bfC_{2}} 
    &=  \textstyle \bfI_T + (\mathbbm{i} t)^2 \bfC_2 + O(t^{4}), \label{eq:eC2_series} \\
e^{(\mathbbm{i} t)^3 \bfC_{3}} 
    &=  \textstyle \bfI_T + (\mathbbm{i} t)^3 \bfC_3 + O(t^{4}), \label{eq:eC3_series} \\
e^{\mathbbm{i} t \bfW_{1}} \dots e^{\mathbbm{i} t \bfW_{R}}
    &= \textstyle \bfI_T 
        + \mathbbm{i} t \sum_{r=1}^{R} {\bfW_r} + O(t^2).
\end{align}

Since $\{ \bfW_r \}$ are mutually independent and $\{ \mean[\bfW_r] \}$ are also mutually commutative due to being strictly diagonal matrices,
    it is straightforward to show that $\mean[\bfC_2] = \mathbf{0}_{T\times T}$.
Plugging the above results into Eq. \eqref{eq:Phi_W_t}, we obtain:
{
\begin{equation}
\mathbf{\Psi}_{\bfW}(t)
    = \log\!\Bigg[
      \mathbf{\Pi}(t) 
      + (\mathbbm{i} t)^3 
      \Bigg[\underbrace{\mean[\bfC_3] + \tsum_{r=1}^{R} \mean[\bfW_r \bfC_2]}_{\frac{1}{3!} \bfTheta_3}
      \Bigg]
    \Bigg]\! + O(t^4) \notag
\end{equation}
}
where $\mathbf{\Pi}(t) \triangleq \prod_{r=1}^{R} \mean[e^{ \mathbbm{i} t \bfW_r }]$.

    The derivation for the closed-form expression of $\bfTheta_3$ in Eq. \eqref{eq:bfTheta_3} uses 
    $\mean[\bfW_i] = \bfSigma_i$ and
    $\mean[\bfW_i^2] = \trace{\bfSigma_i} \bfSigma_i + \bfSigma_i^2$, both obtained directly from Lemma~\ref{lem:exact_moment_Wk}, together with $\mean[\bfW_i \bfW_j \bfW_i]$, which is derived as follows:
\begin{align}
\mean[\bfW_i \bfW_j \bfW_i]
    &=  \mean[\bfW_i \mean[\bfW_j] \bfW_i]
     =  \mean[\bfW_i \bfSigma_j \bfW_i] \notag \\
    &=  \bfSigma_j^{-\sfrac{1}{2}} 
        \mean[ \bfSigma_j^{\sfrac{1}{2}} \bfW_i \bfSigma_j \bfW_i \bfSigma_j^{\sfrac{1}{2}} ] 
        \bfSigma_j^{-\sfrac{1}{2}} \notag \\
    &=  \bfSigma_j^{-\sfrac{1}{2}} 
    \left[ \trace{\bfSigma_i \bfSigma_j} \bfSigma_i \bfSigma_j + \bfSigma_i^2 \bfSigma_j^2 \right]
    \bfSigma_j^{-\sfrac{1}{2}} \label{eq:bfWi_bfWj_bfWi} \\
    &=  \trace{\bfSigma_i \bfSigma_j} \bfSigma_i + \bfSigma_i^2 \bfSigma_j,
\end{align}
where Eq. \eqref{eq:bfWi_bfWj_bfWi} is due to the fact that $\mean[ \bfSigma_j^{\sfrac{1}{2}} \bfW_i \bfSigma_j \bfW_i \bfSigma_j^{\sfrac{1}{2}} ]$ computes the second moment of
    $\bfSigma_j^{\sfrac{1}{2}} \bfW_i \bfSigma_j^{\sfrac{1}{2}}$, which follows complex Wishart distribution with unit degree of freedom and a diagonal covariance $\bfSigma_j^{\sfrac{1}{2}} \bfSigma_i \bfSigma_j^{\sfrac{1}{2}}$. 

Then, $\mathbf{\Pi}(t)$ is factored out and the resulting equation now contains $\mathbf{\Pi}(t)^{-1}$. 
    Since the cumulants are evaluated as ${t \!\to 0}$, both
    ${\mathbf{\Pi}(t) \!= \bfI_T + O(t)}$ and the Neumann series expansion 
    ${\mathbf{\Pi}(t)^{-1} \!= \bfI_T + O(t)}$ can be used to simplify the resulting equation, yielding the following:
\begin{align}
\mathbf{\Psi}_{\bfW}(t)
    =   \sum_{r=1}^{R} \mathbf{\Psi}_{\bfW_r}(t) + \log\left[ \bfI_T + \frac{(\mathbbm{i} t)^3}{3!} \bfTheta_3 \right] + O(t^4).
\end{align}

Taking derivatives both side and evaluating as $t\to 0$ yields Eq. \eqref{eq:mat_cumulant_W}, thus completing the proof of Theorem \ref{theo:mat_cumulant_W}.
% =================================================================================================================================================

\section{Proof of Corollary \ref{lem:cumul_sum_of_iso_wishart}}
\label{apx:cumul_sum_of_iso_wishart}
The sum of statistically independent isotropic complex Wishart matrices can be rewritten in the following stochastic representation:
\begin{align}
\bfX \stackrel{\mathbbm{P}}{=} \sum_{i=1}^{p}\sum_{j=1}^{\nu_i} {\bfX_{i, j}} ,
\end{align}
where $\{ \bfX_{i, j} \}_{j=1}^{\nu_i}$ are statistically i.i.d. complex Wishart matrices with covariance matrix $\sigma_i \bfI_q$. 
    Based on Eq. \eqref{eq:exact_moment_Wk} and Eq. \eqref{eq:ck_trace}, the $n^{\uth}$ moment of $\bfX_{i, j}$ is derived as shown below:
\begin{align}
\mean[\bfX_{i, j}^n]
    &=   \tsum_{j=1}^{n}{ \frac{(n-1)!}{(n-j)!} \smean\left[ \trace{\bfX_{i,j}}^{n-j} \right] {\sigma_i^{j}} \bfI_q } \notag\\
    &=   {\sigma_i^{n}} \tsum_{j=1}^{n}{ \frac{(n-1)!}{(n-j)!} \frac{ (q+n-j-1)! }{ (q-1)! } \bfI_q } \\
    &=   \frac{ (q+n-1)! }{ q! } {\sigma_i^{n}} \bfI_q, \quad n \ge 1.
\end{align}

Since $\trace{\bfX_{i,j}}$ is Gamma-distributed with shape $q$ and scale $\sigma_i^2$, thus $\smean\left[ \trace{\bfX_{i,j}}^{n-j} \right]
    =   \frac{ (q+n-j-1)! }{ (q-1)! } \sigma_i^{n-j}$.
Accordingly, the matrix-valued moment generating function of $\bfX_{i, j}$ is obtained as:
\begin{align}
\mathbf{\Phi}_{i,j}(t)
    &=   \bfI_q + \sum_{n=1}^{ \infty }  \frac{ t^n }{ n! } \frac{ (q+n-1)! }{ q! } \sigma_i^{n} \bfI_q \\
    &=   \left[ 1 + \frac{(1 - \sigma_i t)^{-q} - 1}{q} \right] \bfI_q.
\end{align}

Thus, the matrix-valued cumulant of $\bfX$ is obtained as shown below:
\begin{align}
\mathbf{\Psi}(t)
    =  \sum_{i=1}^{p} \sum_{j=1}^{\nu_i} \log \left[ \mathbf{\Phi}_{i,j}(t) \right] 
        +   \mathbf{\Psi}_{\bfDelta}(t),
\label{eq:mat_cumulant_X}
\end{align}
where $\mathbf{\Psi}_{\mathbf{\Delta}}(t)$ is induced by the 
higher homogeneous Lie polynomial in $\{\bfX_{i,j}\}$.
    We ignore $\mathbf{\Psi}_{\bfDelta}(t)$, since evaluating the infinite series of commutator is mathematically intractable for high-order moments. 
In such a fashion, the eigenvalue moment and cumulant generating functions of $\bfX$ are obtained as follows:
\begin{align}
\phi(t)
    &\propto  \frac{ \trace{ e^{ \mathbf{\Psi}(t) } } }{ q } 
     =  e^{ - \sum\limits_{i=1}^{p} \nu_i \log\left[ q (1 - \sigma_i t )^q  \right] }, \\
\psi(t)
    &\propto  \log\left[ \phi(t) \right]
     =  \sum_{i=1}^{p} \nu_i \log \left[ 1 + \frac{(1 - \sigma_i t)^{-q} - 1}{q} \right],
\label{eq:113}
\end{align}
respectively. 

Let $f(x) \triangleq \log(x)$ and $g_i(t) \triangleq 1 + \frac{(1 - \sigma_i t)^{-q} - 1}{q}$ so that Eq. \eqref{eq:113} is rewritten as
    $\psi(t) = \sum_{i=1}^{p} \nu_i f(g_i(t))$, 
then the $r^{\uth}$ and $t^{\uth}$ derivatives of $f(x)$ and $g(t)$ are given by the following \cite{brychkov2008handbook}:
\begin{align}
f^{(r)}(x) &= \frac{ (-1)^{r-1} (r-1)! }{ x^r }, \\
g^{(t)}(t) &= \frac{ (q+t-1)! }{ q! } (1 - \sigma_i t)^{-q-t} \sigma_i^{m}.
\end{align}

The $n^{\uth}$ eigenvalue cumulant of $\bfX$ is the $n^{\uth}$ order derivatives of $\psi(t) \propto \sum_{i=1}^{p} \nu_i f(g_i(x))$ evaluated at $t \to 0$.
    The higher other derivatives of composite function $f(g_i(x))$ is obtained by using the Fa{\`a} di Bruno's formula \cite{johnson2002curious}, which yields:
\begin{align}
\skmul_{n}[\uplambda_{\bfX}]
    &\propto  \bigg[ \tsum_{i=1}^{p} \nu_i \tsum_{r=1}^{n} f^{(r)}(g_{i}(t))   
    \notag\\
    &\qquad\times
    \left. 
        \mathbbm{B}_{n,r} \! \left( g_{i}^{(1)}(t), g_{i}^{(2)}(t), \dots, g_{i}^{(n-r+1)}(t) \right) \bigg]
    \right|_{t\to 0} \notag \\
    &\propto  \left[ \tsum_{i=1}^{p} \nu_i \sigma_{i}^{n} \right]
    \bigg[ \tsum_{r=1}^{n} \frac{ (-1)^{r-1} (r-1)! }{ q^{r} }
    \notag\\
    &\qquad\qquad\times
    \mathbbm{B}_{n,r} \! \left( (q)_{1}, (q)_{2}, \dots, (q)_{n-r+1} \right) \bigg], \label{eq:98}
\end{align}
where $\mathbbm{B}_{n,r}(\cdot)$ denotes the incomplete Bell polynomial. 
    Note that we used \cite[Eq. (2)]{jin2022partial}, i.e., $\mathbbm{B}_{n,r}( \alpha \beta x_1, \dots, \alpha \beta^{n-r+1} x_{n-r+1} ) 
    = \alpha^r \beta^n \mathbbm{B}_{n,r}( x_1, \dots, x_{n-r+1} )$ to obtain the last equation.
Next, from \cite[Eq. (30)]{jin2022partial}, we deduce that:
\begin{align}
\mathbbm{B}_{n,r} \left( (q)_{1}, \dots, (q)_{n-r+1} \right)
    \!=\!  \frac{ (-1)^{r} }{ r! } \! \tsum_{l=1}^{r} \binom{r}{l} (-1)^l (q l)_{n}.
\end{align}

Plugging the above identity into Eq. \eqref{eq:98} and replacing the component inside the square brackets by $\varRho_{n}[\uplambda_{\bfX}]$, we obtain Eq. \eqref{eq:cumul_sum_of_iso_wishart_1}. This completes the proof of Corollary \ref{lem:cumul_sum_of_iso_wishart}.

\bibliographystyle{IEEEtran}
\bibliography{ref}

% Generated by IEEEtran.bst, version: 1.14 (2015/08/26)
\begin{thebibliography}{10}
\providecommand{\url}[1]{#1}
\csname url@samestyle\endcsname
\providecommand{\newblock}{\relax}
\providecommand{\bibinfo}[2]{#2}
\providecommand{\BIBentrySTDinterwordspacing}{\spaceskip=0pt\relax}
\providecommand{\BIBentryALTinterwordstretchfactor}{4}
\providecommand{\BIBentryALTinterwordspacing}{\spaceskip=\fontdimen2\font plus
\BIBentryALTinterwordstretchfactor\fontdimen3\font minus
  \fontdimen4\font\relax}
\providecommand{\BIBforeignlanguage}[2]{{%
\expandafter\ifx\csname l@#1\endcsname\relax
\typeout{** WARNING: IEEEtran.bst: No hyphenation pattern has been}%
\typeout{** loaded for the language `#1'. Using the pattern for}%
\typeout{** the default language instead.}%
\else
\language=\csname l@#1\endcsname
\fi
#2}}
\providecommand{\BIBdecl}{\relax}
\BIBdecl

\bibitem{Zhang2018CM}
J.~Zhang \emph{et~al.}, ``3{D} {MIMO} for {5G} {NR}: Several observations from
  32 to massive 256 antennas based on channel measurement,'' \emph{IEEE Commun.
  Mag.}, vol.~56, no.~3, pp. 62--70, March 2018.

\bibitem{bjornson2019massive}
E.~Bj{\"o}rnson, L.~Sanguinetti, H.~Wymeersch, J.~Hoydis, and T.~L. Marzetta,
  ``Massive {MIMO} is a reality—what is next?: Five promising research
  directions for antenna arrays,'' \emph{Digital Signal Processing}, vol.~94,
  pp. 3--20, 2019.

\bibitem{rappaport2019wireless}
T.~S. Rappaport \emph{et~al.}, ``Wireless communications and applications above
  100 {GH}z: Opportunities and challenges for {6G} and beyond,'' \emph{IEEE
  Access}, vol.~7, pp. 78\,729--78\,757, 2019.

\bibitem{wang2023road}
C.-X. Wang \emph{et~al.}, ``On the road to {6G}: Visions, requirements, key
  technologies, and testbeds,'' \emph{IEEE Commun. Surveys Tut.}, vol.~25,
  no.~2, pp. 905--974, 2023.

\bibitem{ozcelik2005makes}
M.~Ozcelik, N.~Czink, and E.~Bonek, ``What makes a good {MIMO} channel model?''
  in \emph{IEEE VTC}, vol.~1, 2005, pp. 156--160.

\bibitem{wang2023pervasively}
M.~Wang, Y.~He, H.~Wang, C.-X. Wang, and X.~You, ``A pervasively correlated
  channel model for massive {MIMO} transmission,'' \emph{IEEE Trans. Commun.},
  vol.~72, no.~4, pp. 2441--2456, 2023.

\bibitem{demir2024spatial}
{\"O}.~T. Demir, A.~Kosasih, and E.~Bj{\"o}rnson, ``Spatial correlation
  modeling and {RS}-{LS} estimation of near-field channels with uniform planar
  arrays,'' in \emph{2024 IEEE 25th International Workshop on Signal Processing
  Advances in Wireless Communications (SPAWC)}.\hskip 1em plus 0.5em minus
  0.4em\relax Ieee, 2024, pp. 236--240.

\bibitem{psychogios2025dual}
K.~Psychogios, N.~Moraitis, and A.~D. Panagopoulos, ``Dual-polarized {MIMO}
  {UAV}-to-ground tree shadowed channel: Capacity, multiplexing, stationarity,
  and statistical modeling assessment,'' \emph{IEEE Trans. Antennas Propag.},
  vol.~73, no.~6, pp. 3892--3903, June 2025.

\bibitem{sayeed2002deconstructing}
A.~M. Sayeed, ``Deconstructing multiantenna fading channels,'' \emph{IEEE
  Trans. Sig. Process.}, vol.~50, no.~10, pp. 2563--2579, 2002.

\bibitem{kermoal2002stochastic}
J.-P. Kermoal, L.~Schumacher, K.~I. Pedersen, P.~E. Mogensen, and
  F.~Frederiksen, ``A stochastic {MIMO} radio channel model with experimental
  validation,'' \emph{IEEE J. Sel. Areas Commun.}, vol.~20, no.~6, pp.
  1211--1226, 2002.

\bibitem{ying2014kronecker}
D.~Ying, F.~W. Vook, T.~A. Thomas, D.~J. Love, and A.~Ghosh, ``{K}ronecker
  product correlation model and limited feedback codebook design in a {3D}
  channel model,'' in \emph{IEEE ICC}, 2014, pp. 5865--5870.

\bibitem{Raghavan2010TIT}
V.~Raghavan, J.~H. Kotecha, and A.~M. Sayeed, ``Why does the {K}ronecker model
  result in misleading capacity estimates?'' \emph{IEEE Trans. Inf. Theory},
  vol.~56, no.~10, pp. 4843--4864, Oct. 2010.

\bibitem{Weichselberger2006TWC}
W.~Weichselberger \emph{et~al.}, ``A stochastic {MIMO} channel model with joint
  correlation of both link ends,'' \emph{IEEE Trans. Wirel. Commun.}, vol.~5,
  no.~1, pp. 90--100, Jan. 2006.

\bibitem{ren2026multi}
Y.~Ren \emph{et~al.}, ``Multi-state {RIS}-assisted {MIMO} stochastic channel
  modeling and spatial characteristic measurement,'' \emph{IEEE Trans. Wirel.
  Commun.}, vol.~25, pp. 15\,582--15\,596, April 2026.

\bibitem{Zhang2025TIT}
X.~Zhang, S.~Song, and K.~B. Letaief, ``Fundamental limits of non-centered
  non-separable channels and their application in holographic {MIMO}
  communications,'' \emph{IEEE Trans. Inf. Theory}, vol.~71, no.~9, pp.
  6870--6894, Sep. 2025.

\bibitem{gao2009statistical}
X.~Gao, B.~Jiang, X.~Li, A.~B. Gershman, and M.~R. McKay, ``Statistical
  eigenmode transmission over jointly correlated {MIMO} channels,'' \emph{IEEE
  Trans. Inf. Theory}, vol.~55, no.~8, pp. 3735--3750, 2009.

\bibitem{wen2011ergodic}
C.-K. Wen \emph{et~al.}, ``On the {E}rgodic capacity of jointly-correlated
  {R}ician fading {MIMO} channels,'' in \emph{IEEE ICASSP}, 2011, pp.
  3224--3227.

\bibitem{Li2026tutotial}
H.~Li, M.~Nerini, S.~Shen, and B.~Clerckx, ``A tutorial on beyond-diagonal
  reconfigurable intelligent surfaces: Modeling, architectures, system design
  and optimization, and applications,'' \emph{IEEE Commun. Surveys Tut.},
  vol.~28, pp. 4086--4126, Dec. 2026.

\bibitem{wen2011sum}
C.-K. Wen, S.~Jin, and K.-K. Wong, ``On the sum-rate of multiuser {MIMO} uplink
  channels with jointly-correlated {R}ician fading,'' \emph{IEEE Trans.
  Commun.}, vol.~59, no.~10, pp. 2883--2895, 2011.

\bibitem{kollo1995approximating}
T.~Kollo and D.~von Rosen, ``Approximating by the wishart distribution,''
  \emph{Annals of the Institute of Statistical Mathematics}, vol.~47, no.~4,
  pp. 767--783, 1995.

\bibitem{hillier2021moments}
G.~Hillier and R.~Kan, ``Moments of a {W}ishart matrix,'' \emph{Journal of
  Quantitative Economics}, vol.~19, no. Suppl 1, pp. 141--162, 2021.

\bibitem{maiwald2000calculation}
D.~Maiwald and D.~Kraus, ``Calculation of moments of complex {W}ishart and
  complex inverse {W}ishart distributed matrices,'' \emph{IEE Proc. Radar,
  Sonar Navigat.}, vol. 147, no.~4, pp. 162--168, 2000.

\bibitem{Pivaro2017TVT}
G.~F. Pivaro \emph{et~al.}, ``On the exact and approximate eigenvalue
  distribution for sum of {W}ishart matrices,'' \emph{IEEE Trans. Veh.
  Technol.}, vol.~66, no.~11, pp. 10\,537--10\,541, Nov. 2017.

\bibitem{kiessling2004exact}
M.~Kiessling and J.~Speidel, ``Exact ergodic capacity of {MIMO} channels in
  correlated {R}ayleigh fading environments,'' in \emph{Int. Zurich Seminar
  Commun.}, 2004, pp. 128--131.

\bibitem{hachem2008new}
W.~Hachem \emph{et~al.}, ``A new approach for mutual information analysis of
  large dimensional multi-antenna channels,'' \emph{IEEE Trans. Inf. Theory},
  vol.~54, no.~9, pp. 3987--4004, 2008.

\bibitem{bao2015asymptotic}
Z.~Bao, G.~Pan, and W.~Zhou, ``Asymptotic mutual information statistics of
  {MIMO} channels and {CLT} of sample covariance matrices,'' \emph{IEEE Trans.
  Inf. Theory}, vol.~61, no.~6, pp. 3413--3426, 2015.

\bibitem{dumont2010capacity}
J.~Dumont, W.~Hachem, S.~Lasaulce, P.~Loubaton, and J.~Najim, ``On the capacity
  achieving covariance matrix for {R}ician {MIMO} channels: An asymptotic
  approach,'' \emph{IEEE Trans. Inf. Theory}, vol.~56, no.~3, pp. 1048--1069,
  2010.

\bibitem{maaref2007joint}
A.~Maaref and S.~Aissa, ``Joint and marginal eigenvalue distributions of (non)
  central complex {W}ishart matrices and {PDF}-based approach for
  characterizing the capacity statistics of {MIMO} {R}icean and {R}ayleigh
  fading channels,'' \emph{IEEE Trans. Wirel. Commun.}, vol.~6, no.~10, pp.
  3607--3619, 2007.

\bibitem{chen2025divergence}
B.~Chen and J.~Kortje, ``Divergence maximizing linear projection for supervised
  dimension reduction,'' \emph{IEEE Trans. Inf. Theory}, vol.~71, no.~3, pp.
  2104--2115, March 2025.

\bibitem{fevotte2009nonnegative}
C.~F{\'e}votte, N.~Bertin, and J.-L. Durrieu, ``Nonnegative matrix
  factorization with the {I}takura-{S}aito divergence: {W}ith application to
  music analysis,'' \emph{Neural computation}, vol.~21, no.~3, pp. 793--830,
  2009.

\bibitem{golub2009matrices}
G.~H. Golub and G.~Meurant, \emph{Matrices, moments and quadrature with
  applications}.\hskip 1em plus 0.5em minus 0.4em\relax Princeton University
  Press, 2009.

\bibitem{graczyk2003complex}
P.~Graczyk, G.~Letac, and H.~Massam, ``The complex {W}ishart distribution and
  the symmetric group,'' \emph{The Annals of Statistics}, vol.~31, no.~1, pp.
  287--309, 2003.

\bibitem{smith1995recursive}
P.~J. Smith, ``A recursive formulation of the old problem of obtaining moments
  from cumulants and vice versa,'' \emph{The American Statistician}, vol.~49,
  no.~2, pp. 217--218, 1995.

\bibitem{wang2019multivariable}
L.~Wang, Y.~Gao, and N.~Jing, ``On multivariable {Z}assenhaus formula,''
  \emph{Frontiers of Mathematics in China}, vol.~14, no.~2, pp. 421--433, 2019.

\bibitem{horn2012matrix}
R.~A. Horn and C.~R. Johnson, \emph{Matrix analysis}.\hskip 1em plus 0.5em
  minus 0.4em\relax Cambridge university press, 2012.

\bibitem{golberg1972derivative}
M.~A. Golberg, ``The derivative of a determinant,'' \emph{The American
  Mathematical Monthly}, vol.~79, no.~10, pp. 1124--1126, 1972.

\bibitem{brychkov2008handbook}
Y.~A. Brychkov, \emph{Handbook of special functions: derivatives, integrals,
  series and other formulas}.\hskip 1em plus 0.5em minus 0.4em\relax Chapman
  and Hall/CRC, 2008.

\bibitem{johnson2002curious}
W.~P. Johnson, ``The curious history of {F}a{\`a} di {B}runo's formula,''
  \emph{The American mathematical monthly}, vol. 109, no.~3, pp. 217--234,
  2002.

\bibitem{jin2022partial}
S.~Jin, B.-N. Guo, and F.~Qi, ``Partial {B}ell polynomials, falling and rising
  factorials, {S}tirling numbers, and combinatorial identities,'' \emph{CMES
  Comput. Model. Eng. Sci}, vol. 132, no.~3, pp. 781--799, 2022.

\end{thebibliography}

\end{document}